\DeclareMathOperator{\E}{\mathbb{E}}
\DeclareMathOperator{\Var}{\mathrm{Var}}
\DeclareMathOperator{\Cov}{\mathrm{Cov}}
\DeclareSymbolFont{boperators}   {OT1}{cmr} {bx}{n}
\DeclareSymbolFont{bletters}     {OML}{cmm} {b}{it}
\DeclareSymbolFont{bsymbols}     {OMS}{cmsy}{b}{n}
\DeclareMathSymbol{\BFa}{\mathalpha}{boperators}{`a}
\DeclareMathSymbol{\BFb}{\mathalpha}{boperators}{`b}
\DeclareMathSymbol{\BFc}{\mathalpha}{boperators}{`c}
\DeclareMathSymbol{\BFd}{\mathalpha}{boperators}{`d}
\DeclareMathSymbol{\BFe}{\mathalpha}{boperators}{`e}
\DeclareMathSymbol{\BFf}{\mathalpha}{boperators}{`f}
\DeclareMathSymbol{\BFg}{\mathalpha}{boperators}{`g}
\DeclareMathSymbol{\BFh}{\mathalpha}{boperators}{`h}
\DeclareMathSymbol{\BFi}{\mathalpha}{boperators}{`i}
\DeclareMathSymbol{\BFj}{\mathalpha}{boperators}{`j}
\DeclareMathSymbol{\BFk}{\mathalpha}{boperators}{`k}
\DeclareMathSymbol{\BFl}{\mathalpha}{boperators}{`l}
\DeclareMathSymbol{\BFm}{\mathalpha}{boperators}{`m}
\DeclareMathSymbol{\BFn}{\mathalpha}{boperators}{`n}
\DeclareMathSymbol{\BFo}{\mathalpha}{boperators}{`o}
\DeclareMathSymbol{\BFp}{\mathalpha}{boperators}{`p}
\DeclareMathSymbol{\BFq}{\mathalpha}{boperators}{`q}
\DeclareMathSymbol{\BFr}{\mathalpha}{boperators}{`r}
\DeclareMathSymbol{\BFs}{\mathalpha}{boperators}{`s}
\DeclareMathSymbol{\BFt}{\mathalpha}{boperators}{`t}
\DeclareMathSymbol{\BFu}{\mathalpha}{boperators}{`u}
\DeclareMathSymbol{\BFv}{\mathalpha}{boperators}{`v}
\DeclareMathSymbol{\BFw}{\mathalpha}{boperators}{`w}
\DeclareMathSymbol{\BFx}{\mathalpha}{boperators}{`x}
\DeclareMathSymbol{\BFy}{\mathalpha}{boperators}{`y}
\DeclareMathSymbol{\BFz}{\mathalpha}{boperators}{`z}
\DeclareMathSymbol{\BFA}{\mathalpha}{boperators}{`A}
\DeclareMathSymbol{\BFB}{\mathalpha}{boperators}{`B}
\DeclareMathSymbol{\BFC}{\mathalpha}{boperators}{`C}
\DeclareMathSymbol{\BFD}{\mathalpha}{boperators}{`D}
\DeclareMathSymbol{\BFE}{\mathalpha}{boperators}{`E}
\DeclareMathSymbol{\BFF}{\mathalpha}{boperators}{`F}
\DeclareMathSymbol{\BFG}{\mathalpha}{boperators}{`G}
\DeclareMathSymbol{\BFH}{\mathalpha}{boperators}{`H}
\DeclareMathSymbol{\BFI}{\mathalpha}{boperators}{`I}
\DeclareMathSymbol{\BFJ}{\mathalpha}{boperators}{`J}
\DeclareMathSymbol{\BFK}{\mathalpha}{boperators}{`K}
\DeclareMathSymbol{\BFL}{\mathalpha}{boperators}{`L}
\DeclareMathSymbol{\BFM}{\mathalpha}{boperators}{`M}
\DeclareMathSymbol{\BFN}{\mathalpha}{boperators}{`N}
\DeclareMathSymbol{\BFO}{\mathalpha}{boperators}{`O}
\DeclareMathSymbol{\BFP}{\mathalpha}{boperators}{`P}
\DeclareMathSymbol{\BFQ}{\mathalpha}{boperators}{`Q}
\DeclareMathSymbol{\BFR}{\mathalpha}{boperators}{`R}
\DeclareMathSymbol{\BFS}{\mathalpha}{boperators}{`S}
\DeclareMathSymbol{\BFT}{\mathalpha}{boperators}{`T}
\DeclareMathSymbol{\BFU}{\mathalpha}{boperators}{`U}
\DeclareMathSymbol{\BFV}{\mathalpha}{boperators}{`V}
\DeclareMathSymbol{\BFW}{\mathalpha}{boperators}{`W}
\DeclareMathSymbol{\BFX}{\mathalpha}{boperators}{`X}
\DeclareMathSymbol{\BFY}{\mathalpha}{boperators}{`Y}
\DeclareMathSymbol{\BFZ}{\mathalpha}{boperators}{`Z}
\newtheorem{definition}{Definition}
\newtheorem{assumption}{Assumption}
\newtheorem{theorem}{Theorem}
\newtheorem{corollary}{Corollary}
\newtheorem{lemma}{Lemma}
\title{Causal Inference under Network Interference Using a Mixture of Randomized Experiments}
\author{ {\hspace{1mm}Yiming Jiang} \\
	Georgia Institute of Technology\\
	Atlanta, GA 30318 \\
	\texttt{yjiang463@gatech.edu}\\
	\And
	{\hspace{1mm}He Wang} \\
	Georgia Institute of Technology\\
	Atlanta, GA 30318 \\
	\texttt{he.wang@isye.gatech.edu} 
}
\begin{document}
\maketitle

\begin{abstract}
In randomized experiments, the classic \emph{Stable Unit Treatment Value Assumption} (SUTVA) posits that the outcome for one experimental unit is unaffected by the treatment assignments of other units. However, this assumption is frequently violated in settings such as online marketplaces and social networks, where interference between units is common. We address the estimation of the total treatment effect in a network interference model by employing a mixed randomization design that combines two widely used experimental methods: \emph{Bernoulli randomization}, where treatment is assigned independently to each unit, and \emph{cluster-based randomization}, where treatment is assigned at the aggregate level. The mixed randomization design simultaneously incorporates both methods, {thereby mitigating the bias present in cluster-based designs.} We propose an unbiased estimator for the total treatment effect under this mixed design and show that its variance is bounded by $O(d^2 n^{-1} p^{-1} (1-p)^{-1})$, where $d$ is the maximum degree of the network, $n$ is the network size, and $p$ is the treatment probability. Additionally, we establish a lower bound of $\Omega(d^{1.5} n^{-1} p^{-1} (1-p)^{-1})$ for the variance of any mixed design. Moreover, when the interference weights on the network's edges are unknown, we propose a \emph{weight-invariant design} that achieves a variance bound of $O(d^3 n^{-1} p^{-1} (1-p)^{-1})$, which is aligned with the estimator introduced by \cite{cortez2023low-order} under similar conditions.
\end{abstract}

\keywords{experimental design, causal inference, network interference, clustering, SUTVA}

\section{Introduction}
Randomized experiments are a powerful tool for understanding the causal impact of changes. For example, online marketplaces use randomized experiments to test the effectiveness of new features \citep{johari2022twosided}; tech companies build large-scale experimentation platforms to improve their products and systems \citep{paluck2016socialexp, saveski2017detecting};  economists and social science researchers rely heavily on randomized experiments to understand the effects of economic and social changes \citep{leung2022ANI}. The overarching goal of randomized experiments is to impute the difference between two universes that cannot be observed simultaneously: a factual universe where the treatments assigned to all units remain unchanged, and a counterfactual universe where all units receive a new treatment. The difference in the average outcome between these two universes is commonly known as the {total treatment effect (TTE).}
Standard estimation approaches for the TTE such as A/B testing rely heavily on a fundamental independence assumption: the \emph{stable unit treatment value assumption} (SUTVA) \citep{imbens2015causal}, which states that the outcome for each experimental unit depends only on the treatment it received, and not on the treatment assigned of any other units. 

However, SUTVA is often violated in settings where experimental units interact with each other. In such cases, ignoring the interference between treatment and control groups may lead to significant estimation bias of the TTE.
For example, suppose an e-commerce retailer wants to determine the causal effect on its sales by applying a price promotion to all products.
One simple approach is to run an experiment that applies promotion randomly to a subset of products, and then compare the average sales of promoted products with non-promoted products. 
However, simply comparing the difference in the two groups will likely overestimate the total treatment effect of the promotion, because customers may alter their shopping behavior in the experiment by substituting non-promoted products for similar promoted products. As such, if the retailer decides to implement the price promotion for all products on its platform based on the result of this experiment, the realized sales lift may be smaller than expected.

One common approach for reducing the estimation bias of TTE in the presence of 
interference is \emph{cluster-based randomization}, which assigns treatment at an aggregate level rather than at the individual level. In such a design, the population of the experiment is often represented by an underlying network $G(V,E)$, where each vertex in the set $V$ represents an individual experimental unit, and a directed edge $(i,j) \in E$ indicates that the outcome of unit $i$ may be affected by the treatment of another unit $j$. The weights on the edges represent the magnitude of interference between two units. Note that the weights on $(i,j)$ and $(j,i)$ can be different, as the interference between two units may be asymmetric.
A cluster-based randomization design
partitions the network into disjoint subsets of vertices (i.e., clusters) and applies the same treatment to all units within the same cluster.
Clearly, using larger clusters will reduce the number of edges across clusters 
and will better capture the network interference effect between units, which leads to a smaller bias. However, a downside of using larger clusters is that there will be fewer \emph{randomization units} in the experiment, resulting in higher variance in the TTE estimation. 
Considering this bias-variance trade-off, a substantial body of literature delves into optimal cluster design under network interference
\citep{ugander2013graph, candogan2021robustcorrelated, leung2022spatialcluster, brennan2022clusterbipartite}. 

In this paper, we propose a modification of cluster-based randomization by mixing it with another commonly used experimental design in which each unit is individually and independently randomized, i.e., \emph{Bernoulli randomization}. {The key idea of this approach is that the \emph{difference-in-means} estimator under Bernoulli randomization provides an estimation of the direct treatment effect, while cluster-based randomization estimates the direct treatment effect plus a partial spillover effect. By combining these two randomization methods, we are able to ``learn" the full spillover effect and create an unbiased estimator of the total treatment effect (TTE).} As one experimental unit cannot receive two different treatments simultaneously, the mixed design assigns a randomization method to each unit based on a coin flip. Essentially, the mixed design \emph{simultaneously} runs a cluster-based randomized experiment on one part of the network and an individually randomized experiment on the other part. The idea of mixing cluster-based and Bernoulli designs has been explored by \citep{saveski2017detecting, pouget2019hierachicaltesting}, who use the mixed design to detect the existence of network interference (i.e., hypothesis testing for SUTVA). However, to the best of our knowledge, the effect of using mixed randomization designs for estimating total treatment effects has not been previously studied.

We summarize the main results of the paper as follows. {In Section~\ref{section: set up}, we present the model and assumptions used in this article and introduce cluster-based randomization as a preliminary. We then explain how Bernoulli randomization can eliminate the bias in the cluster-based design and present our mixed randomization design.} In Section~\ref{section: known interference}, we define an unbiased estimator for TTE under mixed randomization designs and establish variance bounds for the estimator given any cluster design. Since these variance bounds cannot be directly applied to optimize cluster design due to computational challenges, we propose a heuristic for clustering and analyze the estimator's variance under this heuristic. For general networks, we prove that the TTE estimation variance is bounded by $O\left({d^2}/({np(1-p)})\right)$, where $d$ is the maximum degree of the network, $n$ is the network size, and $p$ is the fraction of units receiving treatment. We also establish a lower bound of $\Omega\left({d^{1.5}}/({np(1-p)})\right)$ for any mixed design, matching the upper bound up to a factor of $\sqrt{d}$. {Finally, we discuss why the commonly used regression adjustment approach is not appropriate in our setting.}

In Section~\ref{sec: unknown interfernece}, we extend the analysis to the case where the degree of interference between units, i.e., the edge weights of the network, are unknown \emph{a priori}. We propose a weight-invariant mixed randomization design that is agnostic to edge weights and show that this algorithm achieves an upper bound on the estimation variance of $O\left({d^3}/({np(1-p)})\right)$. {Our bound aligns with the variance bound of the pseudo-inverse estimator from \citep{cortez2023low-order, eichhorn2024pseudo-inverse} under the same linear model and improves upon the $O\left({d^6}/({np(1-p)})\right)$ upper bound from \citep{ugander2023randomized}, which uses a purely cluster-based randomization design.} Furthermore, we demonstrate that our proposed estimators are consistent and asymptotically normal, provided the graph is sufficiently sparse (Section~\ref{section: Inference}).

Finally, we present numerical experiments in Section~\ref{sec:simulation} to illustrate the efficacy of the proposed mixed randomization algorithms. We compare the performance of our method against cluster-based randomization and other alternative approaches.

\section{Related Works}

{There is extensive literature exploring causal inference in the presence of interference. For example, in sequential experiments, treatments applied at one time point can influence units in subsequent time slots through the carryover effect \citep{farias2022markovian,bojinov2023switchback}. Spatial interference often arises when treatments in certain regions spill over to nearby areas \citep{leung2022ANI}. In social platform experiments, treatments applied to individuals may affect the behavior of their friends \citep{cortez2023low-order,chen2024optimized-linear-model}. Similarly, in online markets, treatments on one unit may cannibalize the market share of another \citep{johari2022twosided}. Our work focuses on causal inference under network interference. Although we concentrate on general networks, there are studies on bipartite networks \citep{brennan2022clusterbipartite, harshaw2023bipartitelinear} and random networks \citep{li2022random-asymptotics}. A comprehensive review of interference is beyond the scope of this paper, but we recommend the review article by \cite{larsen2024ABT-review} for further information.}

{To address network interference, several experimental design approaches have been developed.} One common approach is \emph{cluster-based randomization}, where units are partitioned into clusters, and treatments are assigned at the cluster level. {Both theoretical and empirical studies have demonstrated that cluster-based designs can reduce bias in the presence of interference \citep{eckles2016reducebias,holtz2024empirical-cbr}. However, larger clusters reduce bias at the cost of higher variance, motivating research into clustering algorithms to minimize mean squared error \citep{viviano2023causal-clustering, ugander2023randomized}. The mixed randomization design, which combines cluster-based and Bernoulli randomization, was first proposed by \cite{saveski2017detecting} for detecting network interference. Our work extends this mixed design to mitigate the bias from cluster-based randomization and produce an unbiased estimator. When treatments can be applied sequentially, \cite{cortez2022rollout-network-interference} demonstrated that staggered rollout designs provide unbiased TTE estimators under unknown network effects, which was later generalized by \cite{cortez2024combine-rollout-cluster} to combine clustering algorithms and reduce variance. \cite{viviano2020experimental} introduced a two-wave experimental design, where a first-wave (pilot) experiment is used to estimate and minimize variance.}

{Several post-experiment analysis approaches have also been explored in prior works. \cite{chin2019regression} studied the \emph{regression adjustment} approach, which leverages network structures to construct covariates that improve estimation precision. The \emph{exposure mapping} framework, introduced by \cite{aronow2017estimating}, assumes that potential outcomes are a function of a low-dimensional representation (exposure mapping) of the treatment assignment vector and covariates. Under known exposure mappings, \cite{ugander2013graph} showed that the Horvitz-Thompson estimator provides unbiased TTE estimates with variance of order $O(d^2 /(np^d))$, where $n$ is the population size, $d$ is the maximum degree of the network, and $p$ is the treatment probability. \cite{ugander2023randomized} later improved this result to $O(d^6 /(np))$. Cases with approximate or misspecified exposure mappings were also studied by \cite{leung2022ANI} and \cite{savje2023misspecified}, respectively. When the potential outcome function has a $\beta$-order polynomial representation, \cite{cortez2023low-order} proposed the SNIPE estimator, which guarantees unbiased TTE estimation under Bernoulli randomization with variance bounded by $O(d^{\beta+2}/(np^{\beta}))$. This result aligns with our analysis for the mixed estimator under a linear potential outcome model. \cite{eichhorn2024pseudo-inverse} renamed the SNIPE estimator as the \emph{pseudo-inverse estimator}, demonstrating that it has similar theoretical properties under cluster-based randomization.}

The main goal of this paper is to correctly estimate the total treatment effect (TTE) under network interference. In addition to this, other research goals have garnered attention. {Testing whether SUTVA holds is crucial for experimenters, and mixed randomization design, exact p-value methods, and increasing allocation approaches have been developed by \cite{pouget2019hierachicaltesting}, \cite{athey2018exact-p-value}, and \cite{han2023detecting-interference}, respectively, to address this.} \cite{savje2021average} considered the inferential target of estimating the direct treatment effect under unknown interference. \cite{candogan2021robustcorrelated} developed a family of experiments called independent block randomization, which aims to correlate the treatment probability of clusters to minimize worst-case variance.

\section{Problem Setup}\label{section: set up}

Let $[n] :=\{1,...,n\}$ for any $n\in\mathbb{Z}_+$. Throughout the paper, we use boldface symbols (e.g., $\BFx$) to denote a vector, $x_i$ to denote its $i^{\text{th}}$ element, and $\bar x$ to denote its mean.

\subsection{Model}
Below, we define a causal inference problem under network interference. Consider an experiment with $n$ individual units. Each unit $i \in [n]$ is associated with a potential outcome function $Y_i(\cdot):\{0,1\}^n\rightarrow \mathbb{R}$, which maps a treatment vector $\BFz \in \{0, 1\}^n$ to a potential outcome value. By convention, we assume the randomness only comes from the treatment assignment. Let the average expected outcome of all units given the assignment $\BFz$ be
\begin{align*}
    \mu(\BFz)=\frac{1}{n}\sum_{i=1}^n Y_i(\BFz).
\end{align*}
Let $\textbf{1} \in \mathbb{R}_n$ be a vector of 1's and $\textbf{0} \in \mathbb{R}_n$ be a vector of 0's.
Our goal is to measure the total treatment effect (TTE): $\E[\mu(\textbf{1})-\mu(\textbf{0})]$. 

Because the number of possible treatment assignments is exponential in $n$ (i.e., $2^n$), making causal inferences on the TTE is impractical unless further assumptions are imposed on the structure of the outcome function $Y_i$. Throughout this paper, we assume that $Y_i(\BFz)$ depends only on the treatments of units from a subset $\mathcal{N}_i \subset V$, which is referred to as the \emph{neighborhood set} of 
unit $i$. More formally, for any two assignment vectors $\BFz$ and $\BFz'$ such that $z_i=z'_i$ and $z_j=z'_j$ for all $j\in \mathcal{N}_i$, we have $Y_i(\BFz)=Y_i(\BFz')$. We assume that the neighborhood set is known and correctly specified for each unit. By connecting units with their neighbors, we get a directed graph $G(V, E)$ where $V$ is the set of all units and $E$ is the set of all edges. Without loss of generality, we assume the graph is connected; otherwise, the problem can be decomposed into separate subgraphs. 

We focus on a linear model where the magnitude of interference on unit $i$ from unit $j$ is measured by a constant weight $v_{ij}$, $\forall (i,j) \in E$, which is summarized by the following assumption:
\begin{assumption}\label{ass: linear model}
    We assume that
the outcome function $Y_i(\BFz)$ of unit $i$ is given by
\begin{equation}
\label{eq:interference model}
Y_i(\BFz)=\alpha_i+\beta_iz_i+\gamma\sum_{j\in \mathcal{N}_i}v_{ij}z_j,\quad\forall i\in V,
\end{equation}
where $\alpha_i$, $\beta_i$, $\gamma$, and $v_{ij}$'s are fixed constants, $\sum_{j\in \mathcal{N}_i}|v_{ij}|\le 1,\;\forall i\in V$ and $\sum_i\sum_{j\in \mathcal{N}_i} v_{ij}\ge 0$.
\end{assumption}

In Eq~\eqref{eq:interference model}, $\alpha_i\in \mathbb{R}$ is the potential outcome of unit $i$ without any treatment, $\beta_i\in \mathbb{R}$ is the direct treatment effect on unit $i$, and $\gamma v_{ij} z_j$ is the spillover effect from an adjacent unit $j \in \mathcal{N}_i$. Because $\sum_{j \in \mathcal{N}_i}|v_{ij}|\leq 1$, the value $|\gamma|$ can be interpreted as the maximum absolute interference effect from neighbors. We treat the coefficients $\alpha_i$, $\beta_i$, and $\gamma$ as fixed but unknown so that the only randomness in the observation
model is due to the choice of treatment assignment $\BFz$. In Section~\ref{section: known interference}, we assume the weights $v_{ij}$'s are known. For example, \cite{saveski2017detecting} considers a social network experiment where each unit represents a user, and $v_{ij}$ is the same for all $(i,j) \in E$ (up to a normalization factor). As another example, in a transportation network experiment, $v_{ij}$ may represent the traffic flow from node $j$ to node $i$. The setting with unknown $v_{ij}$'s will be studied in Section~\ref{sec: unknown interfernece}. 

The model in Eq~\eqref{eq:interference model} can be viewed as a first-order approximation of more general network interference patterns.
Similar linearity assumptions are common in the causal inference literature \citep{brennan2022clusterbipartite, harshaw2023bipartitelinear}, and our model generalizes some previous studies by allowing weighted interference from neighboring units. {Throughout the paper, we maintain the standard assumption that potential outcomes are uniformly bounded:
\begin{assumption}\label{ass: bounded outcomes}
    There exist positive constants $Y_L$ and $Y_M$ such that $Y_L\le Y_i(\BFz)\le Y_M$ for all $i \in V$ and all $\BFz \in \{0,1\}^n$. 
\end{assumption}
This assumption is without loss of generality since we can always shift the outcome of all units by a fixed constant.}
{
\subsection{Cluster-based Design, Estimator and Bias}
}
One common approach used in the literature to reduce the bias of TTE estimation under network interference is the \emph{cluster-based} randomization, where the network is partitioned into disjoint clusters, and units within the same cluster are assigned the same treatment. Specifically, let $m \in \mathbb{Z}_+$ and let the clusters $\{C_1,C_2,...,C_m\}$ be a partition of the vertex set $V$. We define $c(i)$ as a function that maps the index of unit $i$ to its associated cluster, i.e., $c(i)=t$ if and only if $i\in C_t$. The standard estimator for TTE in cluster-based design is the {\emph{difference-in-means} estimator}, which is defined below:

\begin{definition}[Cluster-based Randomization Experiment]
\label{def:cbr}
Let $p\in(0,1)$ be the treatment probability. Under cluster-based randomized design, each unit's treatment is a Bernoulli random variable, i.e., $z_i\sim$ Bernoulli($p$) for all $i \in V$. Moreover, for $i\ne j$, the correlation coefficient of $z_i$ and $z_j$ is $\text{corr}(z_i,z_j)=\boldsymbol{1}\{c(i)=c(j)\}$. The difference-in-means estimator for the TTE in cluster-based design (denoted by the subscript \emph{cb}) is
\begin{equation}\label{eq: cbr formula}
\hat\tau_{cb}=\frac{1}{n}\sum_{i=1}^n \left(\frac{z_i}{p}-\frac{1-z_i}{1-p}\right)Y_i(\BFz).
\end{equation}
\end{definition}

{The estimator represents the difference in the expected average between the treatment and control groups. The next lemma shows that the bias of $\hat\tau_{cb}$ arises from imperfect graph clustering:}

\begin{lemma}\label{lem: cluster-based expectation}
Under Assumption \ref{ass: linear model} and the cluster-based design in Definition \ref{def:cbr}, the estimator $\hat\tau_{cb}$ defined in Eq~\eqref{eq: cbr formula} has bias
\begin{align}
\text{TTE}-\E[\hat\tau_{cb}]=\frac{\gamma}{n}\sum_{i=1}^n\sum_{j\in \mathcal{N}_i}v_{ij}\boldsymbol{1}\{c(i)\ne c(j)\}.
\end{align}  
\end{lemma} 

\begin{proof}
Let $t_i={z_i}/{p} - {(1-z_i)}/{(1-p)}$. Since $\E[t_i z_j] = \boldsymbol{1}\{c(i)=c(j)\}$, the result follows by substituting Eq \eqref{eq:interference model} into Eq \eqref{eq: cbr formula} and taking the expectation.
\end{proof}

{By Lemma~\ref{lem: cluster-based expectation}, the bias is the sum of network effects from the between-cluster connections.} The estimator $\hat\tau_{cb}$ of a cluster-based design is unbiased if and only if all clusters are independent, i.e., there do not exist $i,j \in V$ such that $j\in \mathcal{N}_i$ but $c(i)\ne c(j)$. Nevertheless, in most randomized experiments, the underlying network cannot be perfectly partitioned into independent clusters. {We note that there exist unbiased Horvitz-Thompson (HT) estimators in cluster-based designs \cite[e.g.][]{ugander2013graph} under the exposure mapping framework, but the difference-in-means estimator defined in Eq~\eqref{eq: cbr formula} is commonly used in practice since the variance of the HT estimator is usually prohibitively high.}

{\subsection{Incorporate Bernoulli Randomization to Eliminate Bias}}

{To address the as introduced by imperfect graph partitioning, we propose a refined version of the cluster-based design called the mixed randomization design, which incorporates information from a Bernoulli randomization experiment to construct an unbiased estimator for the total treatment effect (TTE).}

As an example, consider a $d$-regular network (i.e., $|\mathcal{N}_i|=d$ for every $i \in [n]$) with $v_{ij}=1/d$ for all $(i,j) \in E$. In this case, the TTE is $\bar\beta + \gamma$, where $\bar\beta = \sum_{i=1}^n \beta_i / n$. However, assuming between-cluster connections account for half of the total connections, the estimator $\hat\tau_{cb}$ gives $\E[\hat\tau_{cb}] = \bar\beta + \gamma/2$, which introduces significant bias. To reduce this bias, we can additionally use a \emph{Bernoulli randomization design}, where each unit receives treatment independently with probability $p$. This Bernoulli design can be viewed as a special case of cluster-based randomization where each cluster consists of a single unit, i.e., $m=n$ and $|C_1| = |C_2| = \dots = |C_n| = 1$. By Lemma~\ref{lem: cluster-based expectation}, the mean of the difference-in-means estimator for the Bernoulli design is $\hat\tau_{b} := \bar\beta$. If we could obtain estimates from both $\hat\tau_{cb}$ and $\hat\tau_b$, we could construct a new estimator $\hat\tau' := 2\hat\tau_{cb} - \hat\tau_{b}$, which would be an unbiased estimator for the TTE with $\E[\hat\tau'] = \bar\beta + \gamma$. However, since each unit can only receive one treatment at a time, this approach is infeasible.

{This limitation motivates the introduction of a mixed randomization design, which combines cluster-based and Bernoulli designs. In the mixed design, a random decision is made for each unit to determine whether it will receive treatment according to the cluster-based or Bernoulli design, allowing us to simultaneously obtain estimates from both designs. In the first stage, the units are randomly partitioned into two subsets. In the second stage, one subset receives treatment based on the cluster-based design, while the other subset follows the Bernoulli design. Importantly, the randomization decisions in the two stages are independent.}

While similar mixed designs have been proposed by \cite{saveski2017detecting} and \cite{pouget2019hierachicaltesting}, their focus was on hypothesis testing for the presence of network interference. In contrast, our work aims to use the mixed design to estimate the TTE. The formal definition of the mixed design is provided below:

\begin{definition}[Mixed Randomization Design]{\label{def: mixed randomization design} The mixed randomization design is generated as follows: }
\begin{enumerate} 
    \item Let ${C_1, C_2, \dots, C_m }$ be a set of clusters forming a partition of the network $G(V, E)$. Let $\BFW \in {0,1}^m$ be a random vector indicating the assignment of each cluster to either the cluster-based or Bernoulli design. Specifically, $W_j = 1$ implies that cluster $C_j$ follows the cluster-based design, while $W_j = 0$ implies that cluster $C_j$ follows the Bernoulli design. The variables $W_j$ are i.i.d.\ Bernoulli random variables with mean $q$ ($\forall j \in [m]$). Let $\tilde{\BFw} \in {0,1}^n$ represent the corresponding unit-level assignment, where $\tilde{w}_i = W_{c(i)}$ for all $i \in V$.
    
    \item For cluster $C_j$, $j\in [m]$: if $W_j = 1$, assign all the units in this cluster treatment $z_i = 1$ with probability $p$, and assign all the units treatment $z_i = 0$ with probability $1-p$. If $W_j = 0$, assign treatment to the units in cluster $C_j$ by i.i.d.\ Bernoulli variables with mean $p$.
\end{enumerate}
\end{definition}

Note that this procedure does not specify how the clusters ${C_1, C_2, \ldots, C_m}$ should be formed. The design of clustering algorithms to optimize the efficiency of the mixed randomization design will be discussed in later section.

We define the following estimators: $\hat\tau_c$ for the units following the cluster-based design and $\hat\tau_b$ for the units following the Bernoulli design: \begin{subequations}
\begin{align}
    \hat\tau_{c}&=\frac{1}{nq}\sum_{i=1}^n \tilde{w}_{i}\left(\frac{z_i}{p}-\frac{1-z_i}{1-p}\right)Y_i(\BFz),\label{eq: cluster-based arm estimator}\\
    \hat\tau_{b}&=\frac{1}{n(1-q)}\sum_{i=1}^n (1-\tilde{w}_{i})\left(\frac{z_i}{p}-\frac{1-z_i}{1-p}\right)Y_i(\BFz).\label{eq: Bernoulli arm estimator}
\end{align}
\end{subequations}

{Under the mixed randomization design, we construct a novel estimator, called the mixed estimator, which is an affine combination of $\hat\tau_c$ and $\hat\tau_b$:}
{
\begin{definition}[The Mixed Estimator]\label{def: the mixed estimator}
    Under the mixed randomization design in Definition \ref{def: mixed randomization design}, we call $\hat\tau_m$ the mixed estimator if $\hat\tau_m :=\rho\hat\tau_c+(1-\rho)\hat\tau_b$, where $\hat\tau_c$ and $\hat\tau_b$ are defined in \eqref{eq: cluster-based arm estimator} and \eqref{eq: Bernoulli arm estimator}, respectively, and $\rho$ is a value specified before the experiment begins.
\end{definition}}

We will investigate the theoretical properties of the mixed estimator under known network effects in the following sections. The case where network effects are unknown will be discussed in Section~\ref{sec: unknown interfernece}.

{\section{The Mixed Estimator Under Known Interference}}\label{section: known interference}

{In this section, we consider the case where the interference, or network effect, is known to the experimenter in the post-experimental phase. Specifically, we assume access to the values of $v_{ij}$ for all $(i,j)\in E$, while other parameters in our model \eqref{eq:interference model} remain unknown. As an illustrative example, consider social network experimentation where interference arises from information sharing between individuals. Suppose the potential outcome is the amount of time an individual spends watching videos on a platform, which includes both platform-recommended videos and those shared by friends. Using historical data, experimenters can estimate the sharing frequency $v_{ij}$ for each pair of contacts $(i,j)$. However, the magnitude of this effect remains uncertain, so we introduce a parameter $\gamma$ in model \eqref{eq:interference model}, which captures the strength of the effect. Similar settings have been studied in various works \citep{brennan2022clusterbipartite, harshaw2023bipartitelinear, viviano2023causal-clustering}.}

{\subsection{Analysis of Bias and Variance}}

{With known weights $v_{ij}$ for all $(i,j)\in E$, it is possible to choose an appropriate value for the parameter $\rho$ such that the mixed estimator in Definition~\ref{def: the mixed estimator} is unbiased. This result is summarized in the following lemma:}

{\begin{lemma}\label{lemma: unbiased rho}
Under Assumption \ref{ass: linear model} and the mixed randomization design, the estimator in Definition \ref{def: the mixed estimator} is unbiased if and only if
     \begin{align}\label{eq: original rho}
        \rho := \frac{\sum_{i=1}^n \sum_{j\in \mathcal{N}_i} v_{ij}}{\sum_{i=1}^n \sum_{j\in \mathcal{N}_i} v_{ij} \boldsymbol{1}\{c(i)=c(j)\}}.
    \end{align}
\end{lemma}}

\begin{proof}
Substituting model \eqref{eq:interference model} into equation \eqref{eq: cluster-based arm estimator} and taking expectations, we get 
\[
\E[\hat\tau_c] = \bar\beta + \frac{\gamma}{n} \sum_{i=1}^n \sum_{j\in \mathcal{N}_i} v_{ij} \boldsymbol{1}\{c(i)= c(j)\}.
\]
Similarly, $\E[\hat\tau_b] = \bar\beta$. Using the definition of the mixed estimator $\hat\tau_m = \rho \hat\tau_c + (1-\rho) \hat\tau_b$, we have
\[
\E[\hat\tau_m] = \bar\beta + \rho \frac{\gamma}{n} \sum_{i=1}^n \sum_{j\in \mathcal{N}_i} v_{ij} \boldsymbol{1}\{c(i)= c(j)\},
\]
which equals the TTE if and only if Eq~\eqref{eq: original rho} holds.
\end{proof}

Computing the coefficient $\rho$ using Eq~\eqref{eq: original rho} requires full knowledge of the weight coefficients $v_{ij}$ for all edges $(i,j) \in E$. {The numerator of $\rho$ represents the total interference weights, while the denominator captures the total weights of the within-cluster connections. The variance of the mixed estimator depends on $\rho$, as well as on the variances of $\hat\tau_c$ and $\hat\tau_b$. To establish bounds on the variance of the mixed estimator, we first derive variance bounds for $\hat\tau_c$ and $\hat\tau_b$.}
{\begin{lemma}\label{lemma: cluster-based variance}
    Under Assumptions~\ref{ass: linear model} and \ref{ass: bounded outcomes}, and the mixed randomization design, the estimator defined in Eq~\eqref{eq: cluster-based arm estimator} has the following worst-case variance bound:
    \[
    \Var(\hat\tau_c) = \Theta\left(\frac{\eta}{qp(1-p)} + \delta \right),
    \]
where
\begin{align*}
\eta := \frac{1}{n^2} \sum_{k=1}^m |C_k|^2,\quad 
\delta := \frac{1}{n^2} \sum_{1\leq k \ne l \leq m} \left(\sum_{i\in C_k} \sum_{i' \in \mathcal{N}_i \cap C_l} v_{ii'} \right) \left(\sum_{j \in C_l} \sum_{j' \in \mathcal{N}_j \cap C_k} v_{jj'} \right).
\end{align*} 
\end{lemma}}

The proof can be found in Appendix~\ref{apx: proof of lemma cluster-based variance}. In particular, the bound depends on the number and sizes of the clusters ($\eta$), and the weights of between-cluster connections ($\delta$). {Larger clusters increase $\eta$, leading to higher variance for $\hat\tau_c$. The clustering quality also affects the variance through $\delta$, which is zero if the network is perfectly partitioned. Since the Bernoulli design is a special case of cluster-based design where each unit forms its own cluster, we can derive the variance bound for the estimator in Eq~\eqref{eq: Bernoulli arm estimator} by applying Lemma~\ref{lemma: cluster-based variance}:}
{
\begin{corollary}\label{corollary: bernoulli design estimator variance}
    Under Assumptions~\ref{ass: linear model} and \ref{ass: bounded outcomes}, and the mixed randomization design, the estimator defined in Eq~\eqref{eq: Bernoulli arm estimator} has the following worst-case variance bound:
\[
\Var(\hat\tau_b) = \Theta\left(\frac{1}{n(1-q)p(1-p)}\right).
\]
\end{corollary}}

\begin{proof}
Consider the definitions of $\eta$ and $\delta$ in Lemma~\ref{lemma: cluster-based variance}. Under the Bernoulli design, $\eta = \frac{1}{n^2} \sum_{k \in [n]} 1 = \frac{1}{n}$ and 
\[
|\delta| = \frac{1}{n^2} \left| \sum_{i=1}^n \sum_{\{j\in \mathcal{N}_i : i\in \mathcal{N}_j\}} v_{ij} v_{ji} \right| \leq \frac{1}{n},
\]
where the last inequality follows from Assumption~\ref{ass: linear model}, which states that $\sum_{j \in \mathcal{N}_i} |v_{ij}| \leq 1$ for all $i \in V$. The result follows by applying Lemma~\ref{lemma: cluster-based variance}.
\end{proof}

{With the variance bounds for $\hat\tau_c$ and $\hat\tau_b$, we can now derive a variance bound for the mixed estimator $\hat\tau_m = \rho \hat\tau_c + (1-\rho) \hat\tau_b$. Specifically, we have
\[
\Var(\hat\tau_m) = \rho^2 \Var(\hat\tau_c) + (1-\rho)^2 \Var(\hat\tau_b) + 2\rho(1-\rho) \Cov(\hat\tau_c,\hat\tau_b).
\]
By the Cauchy–Schwarz inequality, $|\Cov(\hat\tau_c,\hat\tau_b)| \leq \sqrt{\Var(\hat\tau_c) \Var(\hat\tau_b)}$, so
\begin{align}\label{eq: general variance bound}
\Var(\hat\tau_m) \leq (|\rho|\sqrt{\Var(\hat\tau_c)} + |1-\rho|\sqrt{\Var(\hat\tau_b)})^2 \leq 2\rho^2 \Var(\hat\tau_c) + 2(1-\rho)^2 \Var(\hat\tau_b).
\end{align}
This inequality, along with Lemma~\ref{lemma: cluster-based variance} and Corollary~\ref{corollary: bernoulli design estimator variance}, will be fundamental to our subsequent analysis. Using Eq~\eqref{eq: general variance bound}, we propose an efficient clustering algorithm to upper bound the variance of $\hat\tau_m$, then provide a lower bound for any clustering algorithm in Section~\ref{sec: Optimize the Experimental Design}.}

{\subsection{Optimize the Experimental Design}}\label{sec: Optimize the Experimental Design}
In this section, we discuss how to reduce the variance of the mixed estimator through proper clustering design. However, finding an exact algorithm to minimize the expression in Eq~\eqref{eq: general variance bound} is computationally challenging, as it requires jointly optimizing $\rho$, $\eta$, and $\delta$. \cite{andreev2004balanced} showed that no polynomial-time approximation algorithm can guarantee a finite approximation ratio in graph partitioning (in our case, a finite bound on $\rho$) unless $\mathcal{P} = \mathcal{NP}$. Therefore, instead of searching for an optimal algorithm to minimize the bound in Eq~\eqref{eq: general variance bound}, 
we resort to finding an efficient clustering algorithm that leads to an asymptotic bound on $\Var(\hat\tau_m)$. 

We first consider arbitrary network structures and 
propose a clustering algorithm based on a greedy heuristic. The key idea is to use maximum weight matching to generate clusters of sizes 1 and 2, and then merge these small clusters together to create the desired clustering. The merging step follows a greedy rule, where we iteratively look for two clusters to merge, minimizing the right-hand side of Eq~\eqref{eq: general variance bound}. The objective function for this greedy algorithm is written as:
{
\begin{align}\label{eq: greedy objective}
    S(\rho,\eta,\delta) = \frac{\rho^2\eta}{q} + \rho^2\delta p(1-p) + \frac{(1-\rho)^2}{n(1-q)}.
\end{align}
}
{
The function $S$ is obtained from Eq~\eqref{eq: general variance bound} and the bounds in Lemma~\ref{lemma: cluster-based variance} and Corollary~\ref{corollary: bernoulli design estimator variance}. The values of $\rho$ and $\eta$ can be calculated incrementally when two clusters are merged, while the computation of $\delta$ is more challenging. We instead calculate an upper bound for $\delta$, which depends on the maximum size of the clusters:
\begin{lemma}\label{lem: size bound}
We have the following bound for $\delta$ (defined in Lemma~\ref{lemma: cluster-based variance}): 
\[
\delta \le \tilde{\delta} = \frac{\max_{k\in[m]} |C_k|}{n}.
\]
\end{lemma}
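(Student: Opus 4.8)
The plan is to bound the quantity
\[
\delta=\frac{1}{n^2}\sum_{1\leq k\ne l \leq m}\left(\sum_{i\in C_k}\sum_{i'\in \mathcal{N}_i\cap C_l}v_{ii'}\right)\left(\sum_{j\in C_l}\sum_{j'\in \mathcal{N}_j\cap C_k}v_{jj'}\right)
\]
by exploiting the normalization $\sum_{j\in\mathcal{N}_i}|v_{ij}|\le 1$ together with the simple observation that each cross-cluster interaction term can be controlled by the size of the clusters involved. The key idea is that the inner double sum $\sum_{i\in C_k}\sum_{i'\in \mathcal{N}_i\cap C_l}v_{ii'}$ is a restricted version of $\sum_{i\in C_k}\sum_{i'\in\mathcal{N}_i}v_{ii'}$, so its absolute value is at most $\sum_{i\in C_k}\sum_{i'\in\mathcal{N}_i}|v_{ii'}|\le |C_k|$.

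First I would define, for each ordered pair of distinct clusters, the shorthand $a_{kl}:=\sum_{i\in C_k}\sum_{i'\in\mathcal{N}_i\cap C_l}v_{ii'}$ so that $\delta=\frac{1}{n^2}\sum_{k\ne l}a_{kl}a_{lk}$. I would then establish the bound $|a_{kl}|\le\sum_{i\in C_k}\sum_{i'\in\mathcal{N}_i}|v_{ii'}|\le\sum_{i\in C_k}1=|C_k|$, using the edge-weight normalization. This gives $|a_{kl}a_{lk}|\le|C_k|\,|C_l|$, but directly summing this yields $\sum_{k\ne l}|C_k||C_l|\le n^2$, which is too weak. The improvement comes from observing that $a_{kl}$ is supported only on edges actually crossing from $C_k$ into $C_l$, so the total over all $l\ne k$ satisfies $\sum_{l\ne k}|a_{kl}|\le\sum_{i\in C_k}\sum_{i'\in\mathcal{N}_i}|v_{ii'}|\le|C_k|$ as well, since each edge out of $C_k$ contributes to exactly one cluster $l=c(i')$.

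The decisive step, which I expect to be the main obstacle, is combining these two bounds efficiently. Using $|a_{lk}|\le|C_l|\le\max_{k'}|C_{k'}|$ on one factor and keeping $\sum_{l\ne k}|a_{kl}|\le|C_k|$ on the other, I would write
\[
\sum_{k\ne l}|a_{kl}a_{lk}|\le\sum_{k=1}^m\Bigl(\max_{k'\in[m]}|C_{k'}|\Bigr)\sum_{l\ne k}|a_{kl}|\le\Bigl(\max_{k'}|C_{k'}|\Bigr)\sum_{k=1}^m|C_k|=n\max_{k'\in[m]}|C_{k'}|.
\]
Dividing by $n^2$ then yields $\delta\le\frac{\max_{k\in[m]}|C_k|}{n}$, as claimed. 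The subtlety to handle carefully is the asymmetric use of the two factors: one factor absorbs the maximum cluster size while the other is summed using the conservation of outgoing edge weight, which is precisely what prevents the crude $\sum_{k\ne l}|C_k||C_l|$ blow-up. I would also note in passing that the same inequality with the additional cross term appearing in the proof of Proposition~\ref{The: mixed variance} (referenced there as Eq~\eqref{eq: lemma 5 ineq}) follows by the identical argument, so the bound serves both applications.
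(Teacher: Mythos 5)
Your proof is correct, and it reaches the bound by a somewhat different route than the paper. The paper proves the equivalent unnormalized inequality \eqref{eq: lemma 5 ineq} by first converting the cluster-pair sum back into a sum over ordered vertex pairs $(i,j)$ lying in distinct clusters, via the identity \eqref{eq: proof of Thm 1} established in the proof of Proposition~\ref{The: cluster-based variance}, and then bounding at the vertex level: the second factor is bounded by $1$ using the normalization $\sum_{j'\in\mathcal{N}_j}|v_{jj'}|\le 1$, summing the indicator $x_{ji'}$ over $j$ produces the factor $|C_{c(i')}|\le\max_{k}|C_k|$, and finally $\sum_i\sum_{i'\in\mathcal{N}_i}|v_{ii'}|\le n$ finishes. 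You instead stay at the cluster level throughout, working with the quantities $a_{kl}$ and using the mirrored pair of bounds $|a_{lk}|\le|C_l|\le\max_{k'}|C_{k'}|$ and $\sum_{l\ne k}|a_{kl}|\le|C_k|$, the latter exploiting that the sets $\mathcal{N}_i\cap C_l$ are disjoint across $l$ so outgoing weight is counted once. The two arguments rely on the same ingredients (triangle inequality, the normalization $\sum_{j\in\mathcal{N}_i}|v_{ij}|\le 1$, and the asymmetric treatment in which one factor absorbs the maximum cluster size while the other is summed to at most $n$), but your version avoids the detour through the vertex-pair identity and is therefore shorter and self-contained; the paper gets that identity for free because it was already needed for Proposition~\ref{The: cluster-based variance}. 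Your closing remark is also accurate: inequality \eqref{eq: lemma 5 ineq} is precisely the lemma's bound multiplied through by $n^2$, so your argument serves both places where the paper invokes it, including the proof of Proposition~\ref{The: mixed variance}.
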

}
{The proof for this lemma can be found in Appendix~\ref{apx: proof of lemma size bound}. Therefore, we replace $\delta$ with its upper bound $\tilde{\delta}$ in Eq~\eqref{eq: greedy objective}. With a slight abuse of notation, let $S(\{C_1,...,C_m\}) \equiv S(\rho,\eta,\tilde{\delta})$. Starting from an initial solution generated by maximum weight matching, we iteratively find pairs of clusters to merge that maximize the change in the objective function. Formally, suppose the current solution is $\BFC = \{C_1,...,C_m\}$.} After merging two clusters $C_k, C_l \in \BFC$, we obtain a new cluster set $\BFC' = \BFC / \{C_k, C_l\} \cup \{C_k \cup C_l\}$, and the corresponding objective function decreases by:
\begin{align}\label{eq: difference of objective function}
    \Delta S_{\BFC}(C_k,C_l) = S(\BFC) - S(\BFC / \{C_k, C_l\} \cup \{C_k \cup C_l\}).
\end{align}
The stopping criterion is to check whether $\Delta S_{\BFC}(C_k, C_l)$ is positive, indicating whether the approximate variance upper bound can be further improved by merging clusters. We present the greedy heuristic in Algorithm~\ref{alg: greedy algorithm}.

\begin{algorithm}[!tbp]
\label{alg: greedy algorithm}
\DontPrintSemicolon
  \KwInput{A graph $G(V,E)$}
  \KwOutput{A clustering $\{C_1,C_2,\ldots\}$ of the graph}

    Find a maximum weight matching $M = \{(i_1,j_1), (i_2,j_2), ..., (i_m,j_m)\}$ of the graph $G$ using the weights $v_{ij}$ in Eq~\eqref{eq:interference model}\;
    Initialize the clusters: $\BFC \leftarrow M$\;
    \For{any $k \in [n]$ such that $\nexists j \in \mathcal{N}_k$ with $(k,j) \in M$ or $(j,k) \in M$}     
    { 
        $\BFC \leftarrow \BFC \cup \{\{k\}\}$\;
    }
    \While{$\operatornamewithlimits{max}\limits_{C_{k_1}, C_{k_2} \in \BFC, k_1 \ne k_2} {\Delta S_{\BFC}(C_{k_1}, C_{k_2})} > 0$}{
       $(C_k, C_l) \leftarrow \operatornamewithlimits{argmax}\limits_{C_{k_1}, C_{k_2} \in \BFC, k_1 \ne k_2} {\Delta S_{\BFC}(C_{k_1}, C_{k_2})}$\Comment*[r]{finding two clusters to merge}
       $\BFC \leftarrow \BFC / \{C_k, C_l\} \cup \{C_k \cup C_l\}$\;
    }
    \Return $\BFC$\;
\caption{Greedy Clustering}
\end{algorithm}

The computational complexity of Algorithm~\ref{alg: greedy algorithm} is $O(n^3)$. First, finding a maximum matching in a general graph with $n$ vertices is $O(n^3)$ \citep{galil1986maximummatching}, so lines 1 to 4 in Algorithm~\ref{alg: greedy algorithm} take $O(n^3)$ time. Second, the while loop from line 5 to line 7 can iterate at most $n$ times since there are $n$ vertices. Finally, line 6 can be implemented in $O(|E|)$ time by enumerating the edge set $E$ to find all adjacent clusters and calculating $\sum_{i \in C_k} \sum_{i' \in \mathcal{N}_i \cap C_l} v_{ii'}$ for every adjacent pair $(C_k, C_l)$. Thus, lines 5 to 7 in Algorithm~\ref{alg: greedy algorithm} take $O(n|E|)$ time, where $E$ is the edge set and $|E| \le n(n-1)$.

Throughout the paper, we use $d := \max_{i \in V} |\mathcal{N}_i|$ to denote the maximum degree of the network. {We also assume the design parameter $q$, the probability that a cluster is assigned to the cluster-based design, is a fixed constant.}
The variance of $\hat\tau_m$ using the output of Algorithm~\ref{alg: greedy algorithm} is bounded as follows:

{\begin{theorem}\label{theorem: greedy algorihtm upper bound} 
Under Assumptions~\ref{ass: linear model}--\ref{ass: bounded outcomes} and the mixed randomization design, the variance of the mixed estimator $\hat\tau_m$ using the clustering from Algorithm \ref{alg: greedy algorithm} is upper bounded by $O\left(\frac{d^2}{np(1-p)}\right)$.
\end{theorem}}

The complete proof is included in Appendix~\ref{apx: proof of theorem greedy algorihtm upper bound}.
Theorem~\ref{theorem: greedy algorihtm upper bound} guarantees an $O(d^2)$ upper bound on the variance of $\hat\tau_m$ for general networks.
For networks where $d = o(\sqrt{n})$, the estimator $\hat\tau_m$ is consistent, as the variance converges to 0 as the network size $n \to \infty$. To put these upper bounds in perspective, we present a lower bound on the variance of $\hat\tau_m$ for any clustering algorithm. To establish the lower bound, we consider a specific family of networks with cycle structures.

\begin{definition}[Cycle network]\label{def: cycle network}
We call a network $(d,\kappa)$-cycle with $1 \leq \kappa \leq d$ if each unit $i \in [n]$ is connected to the units indexed by $\{(i\pm 1)\mod n, (i\pm 2)\mod n, ..., (i\pm (\kappa-1))\mod n \}$ and $\{(i\pm \kappa)\mod n, (i\pm 2\kappa)\mod n, ..., (i\pm d\kappa)\mod n \}$.
\end{definition}

It is easily verified that a cycle network in Definition~\ref{def: cycle network} has a maximum degree of $2(d+\kappa) \leq 4d$. Therefore, the parameter $d$ controls the magnitude of the maximum degree, {while $\kappa$ controls the maximum distance a unit can reach through the connections.} The following theorem establishes a lower bound on $\Var(\hat\tau_m)$ for any clustering algorithm under the mixed randomization design.
{
\begin{theorem}\label{The: lower bound cycle graph}
Consider a $(d,\kappa)$-cycle network whose potential outcomes are given by model~\ref{eq:interference model} with parameters $\alpha_i = 1$, $\beta_i = 0$ for all $i \in [n]$, $\gamma = 0$, and $v_{ij} = |\mathcal{N}_i|^{-1}$ for all $(i,j) \in E$.
Then under any clustering algorithm, the estimator $\hat\tau_m$ in the mixed design (Definition~\ref{def: mixed randomization design}) satisfies 
\[
\Var(\hat\tau_m) = \Omega\left(\frac{\min\{\kappa d, d^2/\kappa\}}{np(1-p)}\right).
\]
\end{theorem}
}

The proof is given in Appendix~\ref{apx: proof of theorem lower bound cycle graph}. For general networks,
by setting $\kappa = \sqrt{d}$, the above theorem implies a $\Omega(d^{1.5}/(np(1-p)))$ lower bound for networks with a maximum degree $d$. This lower bound differs from the $O(d^2/(np(1-p)))$ upper bound in Theorem~\ref{theorem: greedy algorihtm upper bound} by only a factor of $O(\sqrt{d})$, verifying the performance of the proposed greedy algorithm. Notice that the lower bound in Theorem~\ref{The: lower bound cycle graph} holds even when using \emph{randomized} clusters, which will be studied in Section~\ref{sec: unknown interfernece}.

{
\subsection{Comparison with Regression Adjustment}\label{sec: compare with regression adjustment}}
{
This section is devoted to discussing why simpler approache, such as the regression adjustment estimator, cannot replace the mixed estimator in our setting. We demonstrate that the regression adjustment estimator is not unbiased under the model we study. The regression adjustment estimator is defined as follows:
}
{
\begin{definition}[Regression Adjustment Estimator]\label{def: regression adjustment}
    The regression adjustment estimator $\hat\tau_{rg}$ is constructed through the following process: Fit an ordinary least squares (OLS) regression model for $Y_i$, $i \in [n]$, over covariates $(z_i, \sum_{j \in \mathcal{N}_i} v_{ij} z_j)$, $i \in [n]$, and obtain the estimated linear model: 
    \[
    Y_i \approx \hat\alpha + \hat\beta z_i + \hat\gamma \sum_{j \in \mathcal{N}_i} v_{ij} z_j.
    \]
    Then use the coefficients to construct an estimator for the TTE, written as 
    \[
    \hat\tau_{rg} = \hat\beta + \hat\gamma \sum_{j \in \mathcal{N}_i} v_{ij}.
    \]
\end{definition}
}
{
\cite{chin2019regression} provided sufficient conditions for the regression adjustment estimator to be unbiased:
\begin{lemma}[\cite{chin2019regression}]\label{lemma: regression adjustment}
    The estimator defined in Definition \ref{def: regression adjustment} is unbiased if the following conditions hold:
    \begin{enumerate}
        \item The data-generation process is $Y_i = \alpha + \beta z_i + \gamma X_i(\BFz) + \epsilon_i$, where $X_i(\BFz) = \sum_{j \in \mathcal{N}_i} v_{ij} z_j$ is the covariate for the interference, and $\epsilon_i$ is the error term.
        \item $X_i(\BFz)$ and $z_i$ are independent for all $i \in [n]$.
        \item The errors are strictly exogenous: $E[\epsilon_i | X_1(\BFz), ..., X_n(\BFz)] = 0$ for all $i \in [n]$.
    \end{enumerate}
\end{lemma}
}

Our model \eqref{eq:interference model} satisfies the first condition of Lemma~\ref{lemma: regression adjustment} if we set $Y_i = \bar\alpha + \bar\beta z_i + \gamma X_i(\BFz) + \epsilon_i$, where $\epsilon_i = \alpha_i - \bar\alpha + (\beta_i - \bar\beta) z_i$ and $\bar\alpha$ and $\bar\beta$ are the means of $\alpha_i$ and $\beta_i$, respectively, for $i \in [n]$. Since the neighborhood set $\mathcal{N}_i$ does not include $i$ itself, the second condition in the lemma also holds. 

However, the third condition, which requires the errors to be strictly exogenous, does not hold in our setting. This is because $\epsilon_i$ is affected by the treatment assignment $z_i$, which, in turn, affects $X_j(\BFz)$ if $i \in \mathcal{N}_j$. The reason for this is that we allow for heterogeneous treatment effects ($\beta_i$), while the regression adjustment approach only accounts for homogeneous treatment effects. 

We provide a numerical counterexample in Appendix~\ref{apx: additional numerical result}, showing that the regression adjustment approach suffers from severe bias under certain circumstances, whereas the mixed estimator remains unbiased.

{\section{Experimental Design under Unknown Interference}}\label{sec: unknown interfernece}

The mixed experiment design proposed in Section~\ref{section: known interference} requires full knowledge of the weight coefficients $v_{ij}$ for all edges $(i,j) \in E$. However, in many practical applications, these weights are unknown a priori. In this section, we extend the mixed experiment design by constructing a clustering algorithm with a corresponding TTE estimator that is agnostic to the weight coefficients.

Recall that the mixed randomization design is based on clustering the network. In the previous section, we assumed the clusters were fixed. The key to extending the design to unknown weights is to use \emph{randomized} clusters. Our approach is motivated by \cite{ugander2023randomized}, who showed that randomized clustering algorithms can provide better variance upper bounds for cluster-based designs. 

Consider a randomized clustering algorithm that produces $k$ different clusterings $\BFC_1, \BFC_2, \dots, \BFC_k$ for a graph $G(V, E)$. Each clustering $\BFC_l$ ($\forall l \in [k]$) is a set of clusters that forms a partition of the network, i.e., $\BFC_l = \{C_1^l, C_2^l, \dots, C_{m_l}^l\}$. Suppose $\BFC_l$ is generated by the clustering algorithm with probability $p(l)$, where $\sum_{l=1}^k p(l) = 1$. Let $c_l(i): V \to [m_l]$ be the function that maps each unit $i \in V$ to the cluster index in the $l^{\text{th}}$ clustering. Then, we can generalize the proof of Lemma~\ref{lemma: unbiased rho} to show that:
\begin{align}\label{eq: randomized expectation of mixed estimator}
    \E[\hat\tau_m] = \bar\beta + \rho \frac{\gamma}{n} \sum_{i=1}^n \sum_{j \in \mathcal{N}_i} v_{ij} \sum_{l=1}^k p(l) \boldsymbol{1}\{c_l(i) = c_l(j)\} 
    = \bar\beta + \rho \frac{\gamma}{n} \sum_{i=1}^n \sum_{j \in \mathcal{N}_i} v_{ij} P(c(i) = c(j)).
\end{align}

{We say that a randomized clustering algorithm is \emph{weight-invariant} if the value of $\E[\hat\tau_m]$ equals the TTE for a specific value of $\rho$ that does not depend on the weights $v_{ij}$ for all $(i,j) \in E$. The only way to achieve this is if there exists $\rho \geq 1$ such that $P(c(i) = c(j)) = 1/\rho$ for all $(i,j) \in E$. We summarize this result in the following lemma.
\begin{lemma}\label{lemma: weight invariant design}
Under Assumption~\ref{ass: linear model} and the mixed randomization design, if there exists a randomized clustering algorithm and a parameter $\rho \geq 1$ such that $P(c(i) = c(j)) = 1/\rho$ for all $(i,j) \in E$, then the mixed estimator $\hat\tau_m$ (defined in Definition~\ref{def: the mixed estimator}) is unbiased.
\end{lemma}
}

The proof follows directly from Eq~\eqref{eq: randomized expectation of mixed estimator} and is thus omitted. The main challenge in TTE estimation under unknown interference lies in designing the weight-invariant clustering algorithm. We will describe the procedure next. {Our algorithm is related to edge partitioning, which divides $E$ into $u$ disjoint subsets, $E = E_1 \cup E_2 \cup \dots \cup E_u$. Let $V(E_i)$ denote the set of all nodes incident to edges in $E_i$. We say an edge partitioning is proper if every partition includes every edge whose vertices are incident to edges in it. Below we give a formal definition.
\begin{definition}[Proper Edge Partitioning]\label{def: proper edge partitioning}
 The edge partitioning $E = E_1 \cup E_2 \cup \dots \cup E_u$ is proper if:
 \begin{enumerate}
     \item $E_1, \dots, E_u$ are disjoint;
     \item $E_k = \{(i,j) \in E \mid i \in V(E_k), j \in V(E_k)\}$ for all $k \in [u]$.
 \end{enumerate}
\end{definition}
}
{
Note that a proper edge partitioning always exists for any network. For example, if each edge itself forms a partition, the conditions in Definition~\ref{def: proper edge partitioning} trivially hold. Another example is the partitioning generated by cliques. A clique is a complete sub-graph of $G$ and thus satisfies the conditions in Definition~\ref{def: proper edge partitioning}. We also mention the edge partitioning algorithm proposed by \cite{zhang2017neighborhood-expansion}, known as the \textit{Neighborhood Expansion} heuristic. This heuristic seeks to minimize the number of replicated nodes and always generates a proper edge partitioning. We will use this heuristic in our numerical experiments in Section~\ref{sec:simulation}.
}

{
After obtaining a proper edge partitioning, we use the weight-invariant clustering algorithm to transform the edge partitioning into a clustering. We define the incidence matrix of an edge partitioning $E = E_1 \cup E_2 \cup \dots \cup E_u$ as a $u \times u$ binary symmetric matrix $M$, where $M_{ij} = 1$ if $V(E_i) \cap V(E_j) \neq \emptyset$ and $M_{ij} = 0$ otherwise for all $i,j \in [u]$. This procedure is presented in Algorithm~\ref{alg: weight-invariant clustering}.
}

\begin{algorithm}[!thbp]\label{alg: weight-invariant clustering}
\DontPrintSemicolon
\KwInput{A graph $G(V,E)$}
    Initialize the clustering $\BFC\leftarrow\{\}$, {the edge partitioning $E=E_1\cup E_2\cup ...\cup E_u$}, and the incidence matrix $M\in\{0,1\}_{u\times u}$\;
    Find the maximum eigenvalue $\lambda^*$ of $M$ and a corresponding eigenvector $\boldsymbol{\omega}$\;
    \For{$i\in [u]$}{
        $X_{i}\leftarrow \text{Exp}(\omega_{i})$ \Comment*[r]{
        sample from Exponential distribution}
    }
    \For{$i\in [u]$}{
        \If{$X_{i}<X_{j}$ for all $j\ne i: M_{ij}=1$}{
            $\BFC\leftarrow \BFC\cup\{V(E_{i})\}$ \Comment*[r]{
        generate clusters from edge partitions}
    }
    }
    \For{$v\in V$}{
        \If{$v\notin C_k\;\forall k\in [|\BFC|]$}{
        $\BFC\leftarrow \BFC\cup\{\{v\}\}$\;
        }
    }
  \KwOutput{Clustering $\BFC$}
\caption{Weight-invariant Clustering}
\end{algorithm}

{In Algorithm~\ref{alg: weight-invariant clustering}, we first calculate the eigenvector and maximum eigenvalue of the incidence matrix. Then, we assign a weight to each edge partition, which is sampled from an exponential distribution based on the eigenvector. Next, we check each edge partition to see if the assigned weight is smaller than all weights from the incident edge partitions. If true, we assign all vertices of that edge partition into a cluster. Finally, if there are any units that have not been assigned to a cluster yet, we let each unassigned unit form its own cluster. Note that each unit is assigned to exactly one cluster in the output of Algorithm~\ref{alg: weight-invariant clustering}, as line 6 ensures that two incident edge partitions will not be used to generate clusters simultaneously. Therefore, Algorithm~\ref{alg: weight-invariant clustering} is a valid clustering algorithm.}

The next theorem shows that Algorithm~\ref{alg: weight-invariant clustering} is weight-invariant if we choose $\rho$ to be the maximum eigenvalue $\lambda^*$ of the incidence matrix $M$.
\begin{theorem}\label{theorem: the algorithm is weight invariant}
    Under Assumption~\ref{ass: linear model} and the mixed randomization design, if the clustering is generated by Algorithm~\ref{alg: weight-invariant clustering} with a proper edge partitioning, and the parameter of the mixed estimator is set as the maximum eigenvalue $\rho = \lambda^*$, then the mixed estimator $\hat\tau_m$ is unbiased.
\end{theorem}

The proof can be found in Appendix~\ref{apx: proof of theorem algorithm is weight invariant}. We briefly describe the key ideas below. Due to the property of a proper edge partitioning, for $(i,j) \in E_k$, the probability that $i$ and $j$ are in the same cluster equals the probability that $V(E_i)$ is selected as a cluster in line 7 of Algorithm~\ref{alg: weight-invariant clustering}. This probability is equivalent to the probability that $X_i < X_j$ for all $j \neq i$ such that $M_{ij} = 1$. The value of this probability is $\omega_i / \sum_{j: M_{ij} = 1} \omega_j$ since the $X_i$'s are generated from the exponential distribution. 

Since $\lambda^*$ is the eigenvalue, we have $M\omega = \lambda^* \omega$, which implies that $\omega_i / \sum_{j: M_{ij} = 1} \omega_j = (\lambda^*)^{-1}$ for all $i \in [u]$. Using this result, we can apply Lemma~\ref{lemma: weight invariant design} to show the unbiasedness of $\hat\tau_m$.

Finally, we give an upper bound on the variance of $\hat\tau_m$ under Algorithm \ref{alg: weight-invariant clustering}. This upper bound differs from the bound in Theorem~\ref{theorem: greedy algorihtm upper bound} by a factor of $O(d)$, since Algorithm \ref{alg: weight-invariant clustering} does not require the knowledge of the weight coefficients $v_{ij}$, $\forall (i,j) \in E$.

\begin{theorem}\label{theorem: weight invariant variance}
   Under Assumption~\ref{ass: linear model}$\sim$\ref{ass: bounded outcomes} and the mixed randomization design, if we use Algorithm \ref{alg: weight-invariant clustering} with edge partitioning $E=E_1\cup...\cup E_u$ satisfying $|V(E_i)|=O(1)$, $\forall i\in [u]$, then the variance of the unbiased mixed estimator ($\rho=\lambda^*$) has the following upper bound
   \[\Var(\hat\tau_m)=O\left(\frac{d^3}{np(1-p)}\right)\]
\end{theorem}
The proof can be found in Appendix~\ref{apx: proof of theorem weight invariant variance}. The key idea of proof is to use $\Var(\hat\tau_m)=\Var(\E[\hat\tau_m|\BFC])+\E[\Var(\hat\tau_m|\BFC)]$ and bound the two terms in the right-hand side separately. The bound for the first term relies on a delicate analysis of the edge partitioning, while the bound for the second term follows the procedure similar to the proof of Theorem~\ref{theorem: greedy algorihtm upper bound}.

\section{Inference}\label{section: Inference}

In this section, we provide useful results for performing statistic inference on the TTE, e.g., by hypothesis testing or constructing confidence intervals. Firstly, we provide exact upper bound for the variance of the mixed estimator under Algorithm~\ref{alg: greedy algorithm} and \ref{alg: weight-invariant clustering}, which serve as the conservative variance estimators for the mixed estimator under both conditions. 
\begin{lemma}\label{lem: exact upper bound greedy}
    Under Assumption~\ref{ass: linear model}$\sim$\ref{ass: bounded outcomes} and Algorithm~\ref{alg: greedy algorithm}, we have
    \[\Var(\hat\tau_m)\le \widehat\Var_1(\hat\tau_m) = \left(|\rho|\sqrt{\widehat\Var(\hat\tau_c)}+|\rho-1|\sqrt{\widehat\Var(\hat\tau_b)}\right)^2,\]
    where 
    \[\widehat\Var(\hat\tau_c)= \sqrt{\left(\frac{Y_M^2}{qp(1-p)}+2Y_M(Y_M-Y_L)\right)\eta+\delta},\]
    \[\widehat\Var(\hat\tau_b)= \sqrt{\left(\frac{Y_M^2}{(1-q)p(1-p)}+2Y_M(Y_M-Y_L+1)\right)/n},\]
    and $\rho$ is defined in Lemma~\ref{lemma: unbiased rho}, $\eta$ and $\delta$ are defined in Lemma~\ref{lemma: cluster-based variance}
\end{lemma}

The proof is based on Eq~\eqref{eq: general variance bound}. The exact variance upper bound for $\hat\tau_c$ and $\hat\tau_b$ (i.e. $\widehat\Var(\hat\tau_c)$ and $\widehat\Var(\hat\tau_b)$) are from the proof of Lemma~\ref{lemma: cluster-based variance} and therefore the proof for the above Lemma is omitted.

\begin{lemma}\label{lem: exact upper bound weight invariant}
    Under Assumption~\ref{ass: linear model}$\sim$\ref{ass: bounded outcomes} and Algorithm~\ref{alg: weight-invariant clustering}, we have
    \[\Var(\hat\tau_m)\le \widehat\Var_2(\hat\tau_m)= \bar E + \frac{\bar V^2}{n} (Y_M-Y_L)^2d^2\lambda^*,\]
    where $\bar V=\max_{i}|V(E_i)|$, $\lambda^*$ is the largest eigenvalue in Algorithm~\ref{alg: weight-invariant clustering} and $\bar E$ equals to the value of $\widehat\Var_1(\hat\tau_m)$ in Lemma~\ref{lem: exact upper bound greedy} with $\rho=\lambda^*$, $\eta=\delta=\bar V/n$.
\end{lemma}

Lemma~\ref{lem: exact upper bound weight invariant} is based on the proof of Theorem~\ref{theorem: weight invariant variance}. We omit the proof here since it is basically retrieving a constant to convert the "big O" notation to an exact upper bound. With these variance estimators, we are able to test whether the treatment leads to a change in the TTE. With the null hypothesis being that the TTE is zero, and the estimated variance being $\hat\sigma^2$, we can construct a confidence interval based on the Chebyshev's inequality, which states
\begin{align*}
    P(|\hat\tau_m-E(\hat\tau_m)|>t\hat\sigma)\le \frac{1}{t^2},
\end{align*}
for any real number $t>0$. By rejecting the null hypothesis when $|\hat\tau_m|>\hat\sigma/\sqrt{\alpha}$, the type-I error of our test is guaranteed to be no greater than $\alpha$. However, this approach may be too conservative in practice, leading to less statistical power. A less conservative approach assumes that $(\hat\tau_m-E(\hat\tau_m))/\hat\sigma$ follows a standard normal distribution, allowing us to construct the $(1-\alpha)\times 100\%$ confidence interval
\[(\hat\tau_m+ \hat\sigma z_{\alpha/2}, \hat\tau_m+ \hat\sigma z_{1-\alpha/2}),\]
where $z_{\alpha/2}$ and $z_{1-\alpha/2}$ are the $\alpha/2$ and $1-\alpha/2$ quantiles of the standard normal distribution.
Typically, proofs for asymptotic normality rely on certain versions of the Central Limit Theorem. To this end, we rewrite the estimator $\hat\tau_m$ in the following form:
\begin{align}\label{eq: estimator reformulate for clt}
    \hat\tau_m&=\frac{1}{n}\sum_{i=1}^n 2(2\rho \tilde{w}_{i}-\rho-\tilde{w}_{i}+1)\left(\frac{z_i}{p}-\frac{1-z_i}{1-p}\right)Y_i(\BFz),
\end{align}
where $\tilde{w}_{i}$ indicates whether unit $i$ received a cluster-based treatment assignment or a Bernoulli assignment (see Eq~\eqref{eq: Bernoulli arm estimator} and \eqref{eq: cluster-based arm estimator}).
Let $L_i$ be the term within the above summation. Then $\hat\tau_m=\frac{1}{n}\sum_{i=1}^n L_i$ is the average of $n$ dependent random variables. We say the random variable $L_i$ has dependent neighborhood $N_i\subset [n]$, if $i\in N_i$ and $X_i$ is independent of $\{L_j\}_{j\ne N_i}$. The following result is an extension of Stein's method for bounding the distance between probability distributions in the Wasserstein metric.

\begin{lemma}[\citealp{ross2011Stein}, Theorem 3.6]\label{lem: wasserstein bound}
    Let $X_1, \ldots , X_n$ be random variables such that $\E[X_i^4] < \infty$, $\E[X_i] =0$,
$\sigma^2 = \Var(\sum_{i=1}^n X_i)$, and define $W =
\sum_{i=1}^n X_i/\sigma$. Let $D=|\max_i\{j| X_i \text{ and } X_j \text{ are dependent}\}|$. Then for a standard normal random variable $Z$, we have
\begin{align*}
    d_W(W,Z)\le \frac{D^2}{\sigma^3}\sum_{i=1}^n \E[|X_i|^3]+ \frac{\sqrt{28} D^{3/2}}{\sqrt{\pi}\sigma^2}\sqrt{\sum_{i=1}^n \E[|X_i|^4]},
\end{align*}
where $d_W(\cdot,\cdot)$ is the Wasserstein metric:
\begin{align*}
    d_W(\mu,\upsilon)\ :=\ \sup_{\{h\in \mathbb{R}\rightarrow\mathbb{R}:|h(x)-h(y)|\leq |x-y|\}}\left|\int h(x)d \mu(x)-\int h(x)d \upsilon(x)\right|.
\end{align*}
\end{lemma}

By Lemma~\ref{lem: wasserstein bound}, we establish sufficient conditions for asymptotic normality of the estimator $\hat\tau_m$.

\begin{theorem}
\label{thm:asymptotic_normality}
    Define $\sigma^2_n=\Var({\sqrt{n}}\hat\tau_m/{\rho})$. Under Assumption~\ref{ass: linear model}$\sim$\ref{ass: bounded outcomes} and the mixed randomization design, if $\liminf_{n\rightarrow\infty} \sigma^2_n>0$, the treatment probability $p$ is fixed, and the maximum cluster size is $O(1)$, we have
    \begin{align*}
        \frac{\sqrt{n}(\hat\tau_m-\E[\hat\tau_m])}{\sigma_n \rho}\mathop{\rightarrow}\limits^{d} N(0,1)
    \end{align*}
    under one of the following additional conditions:

    (1) The edge weights are know, the design applies Algorithm~\ref{alg: greedy algorithm}, and ${d^8}/{n}\rightarrow 0$;

    (2) The edge weights are unknow, the design applies Algorithm~\ref{alg: weight-invariant clustering}, and ${d^{12}}/{n}\rightarrow 0$.
\end{theorem}

The proof can be found in Appendix~\ref{apx: proof of asymptotic_normality}. Therefore, as long as the graph is sufficiently sparse (e.g. the maximum degree is a constant), we can approximate the distribution of the mixed estimator by normal distribution under large population.

\section{Numerical Experiments}
\label{sec:simulation}

\subsection{Test Instances} \label{section: test instances}
We generalize networks in the numerical experiments from the random geometric graph (RGG) model \citep{ugander2013graph}. The RGG model is a spatial graph where $n$ units are scattered according to a uniform distribution in the region $[0,\sqrt{n}]\times[0,\sqrt{n}]$. Two units $i$ and $j$ are connected if $dist(i,j)\le \sqrt{r_0/\pi}$, where $r_0$ is the limiting expected degree of the RGG model. To simulate the long-range dependency, we allow a unit to connect with $r_1$ units outside the radius $\sqrt{r_0/\pi}$, which are selected uniformly at random. Figure~\ref{fig: test instances} shows two randomly generated RGG model with different parameters.
Furthermore, if $i$ and $j$ are connected, we assign the weight coefficient $v_{ij}$ independently from a uniform distribution $U(-1/r,2/r)$, where $r=r_0+r_1$. Note that the magnitude of the inference is inversely proportional to the average degree.
\begin{figure}[!htb]
\caption{Examples of RGG networks: the left figure is randomly generated from $(n,r_0,r_1)=(100,10,0)$ and the right figure $(n,r_0,r_1)=(100,5,5)$.}\label{fig: test instances}
\centering
\includegraphics[width=0.395\textwidth]{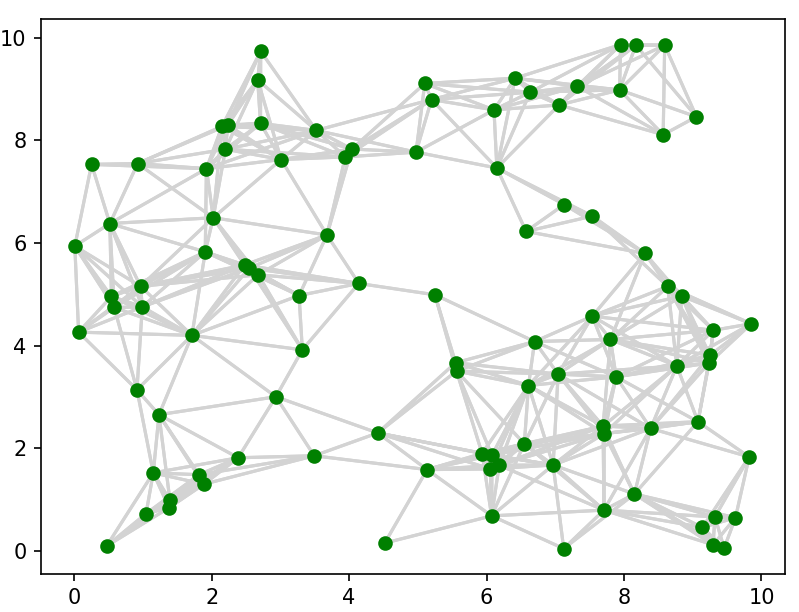}
\includegraphics[width=0.4\textwidth]{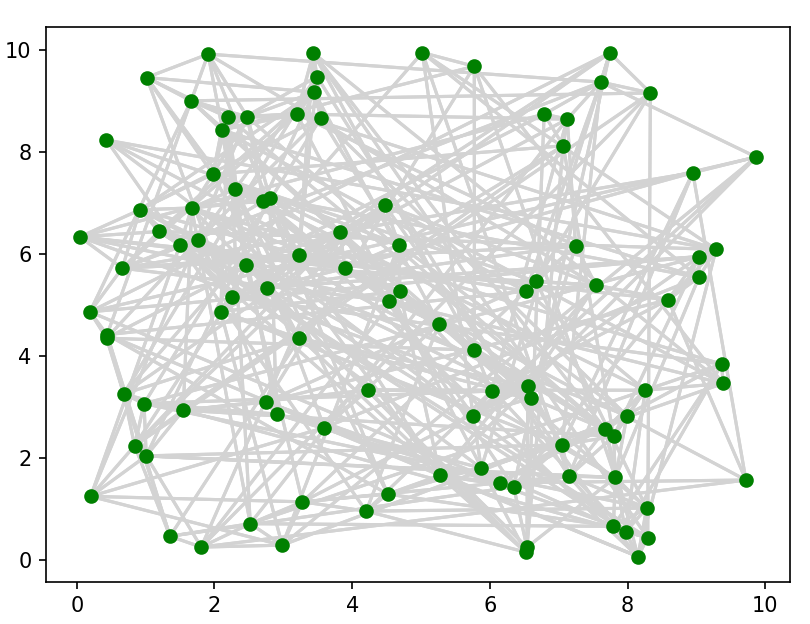}
\end{figure}

To choose the parameters in the exposure mapping model Eq~\eqref{eq:interference model}, 
we first generate $\hat{\alpha}_i$ and $\hat{\beta}_i$ independently from $U(-1,1)$ for each $i\in [n]$. We then re-scale  the parameters by $\alpha_i=\hat\alpha_i+1-\frac{1}{n}\sum_{j=1}^n \hat\alpha_j$, $\beta_i=\hat\beta_i+0.5-\frac{1}{n}\sum_{j=1}^n \hat\beta_j$ and $\gamma=0.5\left(\sum_{i=1^n}\sum_{j\in \mathcal{N}_i}v_{ij} \right)^{-1}$. 
The re-scaling step ensures that the true baseline parameter,
, the direct treatment effect parameter, and the interference effect parameter are $\bar\alpha=1$, $\bar\beta=0.5$, and $\gamma\sum_{i=1^n}\sum_{j\in \mathcal{N}_i}v_{ij}=0.5$, respectively.
The true value of the TTE is equal to $1$ in all randomly generated instances. We also fix the treatment probability $p=0.5$ and the probability of cluster-based randomization under the mixed design to be $q=0.9$, as the Bernoulli randomization only contributes a small proportion of variance under the mixed design.

{\subsection{Compare the Mixed Design with the Cluster-based Design}}\label{section: numerical comparison known interference}

{The first task of our numerical experiment is to verify that the mixed design indeed improves upon the cluster-based design by eliminating the bias from imperfect network clustering. In this experiment, we use test instances with known edge weights, which are required by clustering algorithms in practice. The benchmark is the cluster-based design (Definition~\ref{def:cbr}) generated by the state-of-the-art community detection algorithm called \textit{Leiden} \citep{leiden}, which maximizes the modularity of the clustering. We test the mixed design (Definition~\ref{def: mixed randomization design}) in conjunction with Algorithm~\ref{alg: greedy algorithm}. }

To compare the performance under these two experimental designs, we generate $3 \times 6 = 18$ test instances, parameterized by $(n, r_0, r_1)$. The network size $n$ is chosen from $\{1000, 2000, 4000\}$. The values of $(r_0, r_1)$ are chosen from the set $\{(4,0), (2,2), (0,4), (16,0), (8,8), (0,16)\}$. Note that $r_0 + r_1$ is either 4 or 16, representing the expected average degree in the random geometric graph (RGG) model. Recall that when $r_0$ is positive and $r_1$ is zero, units are only connected to those within short distances. When $r_1$ is positive, units may be subject to long-range interference.

{We consider three performance metrics for the TTE estimators $\hat\tau_m$ from the mixed design and $\hat\tau_{cb}$ from the cluster-based design: empirical bias (Bias), empirical variance ($\Var$), mean squared error (MSE), and type-1 error. The empirical bias and variance are calculated from 10,000 independent replications for each design. To compute the type-1 error, we first construct a $95\%$ confidence interval for TTE using the normal distribution, where the variance of $\hat\tau_m$ is estimated using Lemma~\ref{lem: exact upper bound greedy}. The estimated variance of $\hat\tau_{cb}$ is based on the well-known Neyman conservative variance estimator, which assumes that the clusters are independent of each other. Then, the type-1 error is obtained from 1000 simulations. The results are summarized in Table~\ref{table: compare with CBR}.}

\begin{table}[!htbp]
\centering
\begin{tabular}{l|rr|rr|rr| rr} 
 \hline\hline
 \multirow{2}{*}{$(n,r_0,r_1)$} & \multicolumn{2}{c|}{Bias} & \multicolumn{2}{c|}{Var} & \multicolumn{2}{c|}{MSE}& \multicolumn{2}{c}{Type-1 Error} \\
 \cline{2-9}
  & $\hat\tau_m$ & $\hat\tau_{cb}$ & $\hat\tau_m$ & $\hat\tau_{cb}$ & $\hat\tau_m$ & $\hat\tau_{cb}$ & $\hat\tau_m$ & $\hat\tau_{cb}$ \\  
 \hline
 (1000,4,0)& \bf 0.0101 & 0.2484 & 0.0745 & \bf 0.0231 & \bf 0.0746  & 0.0848  & 0.3\% & 4.5\%  \\ 
 (1000,2,2) & \bf 0.0392 & 0.2834 & 0.1163 & \bf 0.0207 & 0.1178  & \bf 0.1010  & 0.7\% & 5.6\% \\
 (1000,0,4) & \bf 0.0318 & 0.3901 & 0.2308 & \bf 0.0157 & 0.2318  & \bf 0.1679  & 0.4\% & 7.3\% \\
 (1000,16,0)& \bf 0.0564 & 0.3430 & 0.3952 & \bf 0.0937 & 0.3984  & \bf 0.2113  & 0.5\% & 5.5\% \\
 (1000,8,8) & \bf 0.0378 & 0.3637 & 1.2000 & \bf 0.0843 & 1.2014  & \bf 0.2166 & 0.6\% & 6.2\%  \\ 
 (1000,0,16)& \bf 0.0651 & 0.4300 & 1.8348 & \bf 0.0695 & 1.8390  & \bf 0.2544  & 0.9\% & 7.4\% \\ 
 \hline
 (2000,4,0) & \bf 0.0076 & 0.2667 & 0.0365 & \bf 0.0115 & \bf 0.0366  & 0.0826  & 0.5\% & 4.9\% \\  
 (2000,2,2) & \bf 0.0052 & 0.2765 & 0.0547 & \bf 0.0101 & \bf 0.0547  & 0.0866 & 0.6\% & 4.8\%  \\
 (2000,0,4) & \bf 0.0069 & 0.3623 & 0.1263 & \bf 0.0084 & \bf 0.1263  & 0.1397 & 0.6\% & 6.2\%  \\
 (2000,16,0)& \bf 0.0033 & 0.3232 & 0.1986 & \bf 0.0474 & 0.1986  & \bf 0.1519  & 0.5\% & 6.1\% \\
 (2000,8,8) & \bf 0.0035 & 0.3894 & 0.5709 & \bf 0.0436 & 0.5709  & \bf 0.1952 & 0.7\% & 6.9\%  \\ 
 (2000,0,16)& \bf 0.0019 & 0.4321 & 0.8964 & \bf 0.0355 & 0.8964  & \bf 0.2222  & 1.0\% & 8.0\% \\
 \hline
 (4000,4,0) & \bf 0.0051 & 0.2071 & 0.0180 & \bf 0.0060 & \bf 0.0180  & 0.0489 & 0.3\% & 5.5\%  \\ 
 (4000,2,2) & \bf 0.0012 & 0.3031 & 0.0269 & \bf 0.0052 & \bf 0.0269  & 0.0971  & 0.3\% & 5.2\% \\
 (4000,0,4) & \bf 0.0008 & 0.3745 & 0.0593 & \bf 0.0041 & \bf 0.0593  & 0.1444  & 0.3\% & 6.8\% \\
 (4000,16,0)& \bf 0.0027 & 0.3134 & 0.1038 & \bf 0.0233 & \bf 0.1038  & 0.1215 & 0.3\% & 5.7\%  \\
 (4000,8,8) & \bf 0.0036 & 0.3624 & 0.3113 & \bf 0.0219 & 0.3113  & \bf 0.1532  & 0.3\% & 5.4\% \\ 
 (4000,0,16)& \bf 0.0035 & 0.4438 & 0.4753 & \bf 0.0176 & 0.4753  & \bf 0.2156  & 0.3\% & 7.0\% \\ 
 \hline\hline
\end{tabular}
\caption{Compare the Mixed and Cluster-based Designs.}
\label{table: compare with CBR}
\end{table}

From Table~\ref{table: compare with CBR}, we have several key observations. First, let us examine the bias for $\hat\tau_m$ (mixed design using Algorithm~\ref{alg: greedy algorithm}) and $\hat\tau_{cb}$ (cluster-based design). As we can see, the bias of $\hat\tau_m$ is negligible under all test instances, while the bias of $\hat\tau_{cb}$ is significant. This confirms that the mixed randomization design successfully removes the bias inherent in the cluster-based design under our settings.

For empirical variances, Table~\ref{table: compare with CBR} shows that, given $(r_0, r_1)$, if the network size $n$ doubles, the sample variances decrease approximately by half. {Additionally, for fixed $n$, as the expected degree $(r_0 + r_1)$ increases, the variances of $\hat\tau_m$ increase, but not more than $(r_0 + r_1)^2$. These relationships confirm the $O(d^2 / n)$ bound established in Theorem~\ref{theorem: greedy algorihtm upper bound}. We also observe that the empirical variance of $\hat\tau_{cb}$ decreases as its bias increases for a fixed expected degree $r_0 + r_1$, which corroborates Lemma~\ref{lem: cluster-based expectation} and Lemma~\ref{lemma: cluster-based variance}. Larger clusters result in higher variance but fewer between-cluster edges, and vice versa. Although the empirical variance of $\hat\tau_{cb}$ is generally smaller than that of $\hat\tau_m$, the mean squared error (MSE) of $\hat\tau_{cb}$ is not always smaller due to the large bias. As the population size $n$ increases, the mixed design approach becomes more advantageous, demonstrating a decaying MSE. Since $\hat\tau_m$ is consistent, while $\hat\tau_{cb}$ is not, the mixed design approach will significantly outperform the cluster-based design approach under larger populations.}

Finally, we examine the type-1 error of these two estimators under their respective $95\%$ confidence intervals. The type-1 error for $\hat\tau_m$ is significantly less than $5\%$, indicating that the variance estimator in Lemma~\ref{lem: exact upper bound greedy} is indeed conservative. This conservativeness presents challenges for statistical inference: achieving a specific statistical power requires more data than usual. However, in practice, the population size is often much larger than our test instances, which mitigates the influence of the conservative variance estimation.

\subsection{Compare with Other Approaches Under Unknown Interference}

The second task of our numerical experiment is to compare the mixed estimator with other unbiased alternative approaches under unknown interference. The first approach is the pseudo inverse estimator under the first-order interaction model and Bernoulli randomization \citep{cortez2023low-order, eichhorn2024pseudo-inverse}. The first-order interaction model requires the potential outcome function to be a first-order polynomial with unknown coefficients, which aligns with our model in Assumption~\ref{ass: linear model}. The second alternative approach is the Horvitz-Thompson (HT) estimator based on the exposure mapping framework \citep{ugander2023randomized}, which utilizes a randomized cluster algorithm to reduce variance. Notably, this approach only assumes neighborhood interference (i.e., $Y_i$ depends only on the treatments of units in $\{i\} \cup \mathcal{N}_i$) and does not require a specific form for the potential outcome function.

Our mixed randomization design follows Algorithm~\ref{alg: weight-invariant clustering}, where the edge partitioning is generated from the \textit{Neighborhood Expansion} heuristic with a balance factor of 0.3 \citep{zhang2017neighborhood-expansion}. We denote the mixed estimator by $\hat\tau_m$, the pseudo inverse estimator by $\hat\tau_{pi}$, and the HT estimator by $\hat\tau_{ht}$. The test instances are the same as those in Section~\ref{section: numerical comparison known interference}.

We consider three performance metrics: empirical bias, empirical variance, and estimated variance. The empirical bias and variance are calculated from 10,000 independent replications of the experiment. The variance estimation of $\hat\tau_m$ is based on Lemma~\ref{lem: exact upper bound weight invariant}, $\hat\tau_{pi}$ is based on the conservative variance estimator from \cite{aronow2017estimating}, and $\hat\tau_{ht}$ is based on the variance upper bound provided in \cite{ugander2023randomized}. It is important to note that the estimated variances for all three estimators are conservative. We omit the type-1 error result for this numerical simulation since the type-1 error of $\hat\tau_m$ is approximately $0.5\%$, and $0\%$ for both $\hat\tau_{pi}$ and $\hat\tau_{ht}$ across all test instances. The results are summarized in Table~\ref{table:Compare with other approaches}.

\begin{table}[!htbp]
\centering
\begin{tabular}{l|rrr|rrr|rrr} 
 \hline\hline
 \multirow{2}{*}{$(n, r_0, r_1)$} & \multicolumn{3}{c|}{Bias} & \multicolumn{3}{c|}{Var} & \multicolumn{3}{c}{Estimated Var} \\
 \cline{2-10}
  & $\hat{\tau}_m$ & $\hat{\tau}_{ht}$ & $\hat{\tau}_{pi}$ & $\hat{\tau}_m$ & $\hat{\tau}_{ht}$ & $\hat{\tau}_{pi}$ & $\hat{\tau}_m$ & $\hat{\tau}_{ht}$ & $\hat{\tau}_{pi}$ \\  
 \hline
(1000, 4, 0)   & 0.01  & 0.05 & 0.02 & 14.675 & \bf 0.1264 & 0.3478 & 30.128 & \bf 6.6184 & 233.37 \\ 
(1000, 2, 2)   & 0.02 & 0.00 & 0.03 & 10.321 & 0.2929 & \bf 0.2857 & \bf 22.571 & 64.695 & 258.53 \\
(1000, 0, 4)   & 0.01 & 0.03 & 0.01 & 13.468 & 0.5389 & \bf 0.4095 & \bf 23.269 & 83.324 & 378.99 \\
(1000, 16, 0)  & 0.30 & 0.01 & 0.00 & 122.84 & \bf 2.6131 & 2.7423 & 247.01 & \bf 233.80 & 1994.1 \\
(1000, 8, 8)   & 0.08 & 0.07 & 0.01 & 108.77 & 3.7415 & \bf 2.3586 & \bf 212.15 & 6517.1 & 2240.3 \\ 
(1000, 0, 16)  & 0.13 & 0.04 & 0.03 & 158.51 & 5.2616 & \bf 3.2170 & \bf 323.84 & 7343.8 & 2682.0 \\ 
 \hline
(2000, 4, 0)   & 0.16 & 0.01 & 0.00 & 7.3036 & \bf 0.0707 & 0.1565 & 14.831 & \bf 2.9456 & 110.22 \\  
(2000, 2, 2)   & 0.01 & 0.00 & 0.00 & 4.9966 & 0.1548 & \bf 0.1473 & \bf 10.049 & 33.782 & 144.49 \\
(2000, 0, 4)   & 0.02 & 0.00 & 0.05 & 6.9202 & 0.2689 & \bf 0.2106 & \bf 12.525 & 38.062 & 190.01 \\
(2000, 16, 0)  & 0.09 & 0.03 & 0.02 & 63.970 & 1.4521 & \bf 1.4333 & \bf 117.69 & 127.53 & 898.09 \\
(2000, 8, 8)   & 0.13 & 0.00 & 0.01 & 58.479 & 1.8145 & \bf 1.3320 & \bf 115.66 & 6442.4 & 1331.5 \\ 
(2000, 0, 16)  & 0.04 & 0.05 & 0.01 & 78.623 & 2.3043 & \bf 1.6702 & \bf 152.97 & 8527.0 & 1259.2 \\
 \hline
(4000, 4, 0)   & 0.11 & 0.03 & 0.00 & 3.9051 & \bf 0.0391 & 0.0835 & 8.5130 & \bf 1.7293 & 73.573 \\ 
(4000, 2, 2)   & 0.05 & 0.01 & 0.01 & 2.4129 & 0.0794 & \bf 0.0784 & \bf 4.7163 & 22.407 & 65.726 \\
(4000, 0, 4)   & 0.01 & 0.02 & 0.02 & 3.3029 & 0.1296 & \bf 0.1159 & \bf 7.2133 & 23.069 & 71.283 \\
(4000, 16, 0)  & 0.01 & 0.02 & 0.02 & 35.384 & \bf 0.7310 & 0.7751 & 74.246 & \bf 70.175 & 629.40 \\
(4000, 8, 8)   & 0.03 & 0.03 & 0.04 & 28.557 & 0.8977 & \bf 0.6562 & \bf 58.816 & 7401.7 & 423.76 \\ 
(4000, 0, 16)  & 0.04 & 0.05 & 0.02 & 36.871 & 1.1179 & \bf 0.8674 & \bf 73.024 & 7647.2 & 736.26 \\ 
 \hline\hline
\end{tabular}
\caption{Compare with other approaches.}
\label{table:Compare with other approaches}
\end{table}

{We have several observations from the results in Table~\ref{table:Compare with other approaches}. First, the empirical biases of all three estimators are relatively small, consistent with the fact that all three estimators are unbiased. Second, let us compare the variances, which are directly related to the mean squared error (MSE). The weight-invariant design ($\hat\tau_m$) has significantly larger sample variance compared to both the pseudo inverse estimator ($\hat\tau_{pi}$) and the HT estimator ($\hat\tau_{ht}$), indicating substantial room for improvement in developing better clustering algorithms for the mixed design. }

Next, we examine the estimated variances of the three estimators. The mixed estimator $\hat\tau_m$ exhibits the tightest variance estimation among the approaches, whereas the empirical variances of $\hat\tau_{pi}$ and $\hat\tau_{ht}$ are much smaller than their respective estimated variances. This suggests that there might be a gap in the theoretical analysis for these estimators.

Finally, we analyze the impact of the network structure parameters $r_0$ and $r_1$ on the performance of these TTE estimators. Recall that when the average degree $(r_0 + r_1)$ is fixed, a larger $r_0$ implies that more neighbors are located at short distances, while a larger $r_1$ indicates greater influence from long-range neighbors. {Interestingly, we find that the mixed estimator $\hat\tau_m$ and the pseudo inverse estimator $\hat\tau_{pi}$ are robust across all cases, whereas the performance of the HT estimator $\hat\tau_{ht}$ deteriorates as the proportion of long-range neighbors increases. This observation aligns with the theory in \cite{ugander2023randomized}, where it is shown that $\Var(\hat\tau_{ht}) = O\left( \frac{\kappa^4 d^2}{n p (1 - p)} \right)$ under the Randomized Graph Cluster Randomization (RGCR) design, where $\kappa$ represents the graph growth rate. More long-range neighbors lead to a larger $\kappa$ (e.g., when $r_1 = 0$, $\kappa = O(1)$; when $r_0 = 0$, $\kappa \approx d$).}

\section{Conclusions}

We consider the problem of estimating the total treatment effect when units in an experiment may interfere with other units. The interference pattern is represented by an edge-weighted network, where the weights denote the magnitude of interference between a pair of units. {We investigate a novel approach that leverages Bernoulli randomization to eliminate the bias inherent in cluster-based randomization when the network clustering is imperfect.} Building upon previous works \citep{saveski2017detecting, pouget2019hierachicaltesting}, we propose a mixed experimental design that combines both randomization strategies simultaneously.

{For the case when the interference is known, we introduce an unbiased mixed estimator for the total treatment effect. We analyze the variance of this estimator and develop a greedy clustering algorithm to minimize its asymptotic variance. Furthermore, we establish upper and lower bounds on the variance of the mixed estimator under this clustering algorithm, which represent novel contributions to this area of research.}

{For the case when the interference is unknown, we provide a sufficient condition under which the mixed estimator remains unbiased. Additionally, we design an algorithm that meets this condition and guarantees consistent estimation with a mean squared error (MSE) of $O\left(\frac{d^3}{np(1-p)}\right)$. This performance matches the asymptotic behavior of the pseudo inverse estimator \citep{eichhorn2024pseudo-inverse} under a linear model.}

Finally, we discuss the inferential properties of the proposed estimator, showing that it achieves asymptotic normality under certain sparsity conditions in the network structure. These results contribute to the broader understanding of causal inference under network interference and provide practical tools for unbiased and efficient estimation in real-world experimental settings.

\bibliographystyle{unsrtnat}
\bibliography{references}  

\newpage\appendix
\renewcommand{\theequation}{A\arabic{equation}}
\renewcommand{\thelemma}{A.\arabic{lemma}}
\renewcommand{\theproposition}{A.\arabic{proposition}}
\setcounter{equation}{0}

\begin{center}
    \Large \textbf{Appendix}
\end{center}
\section{Proof of Main Results}\label{apx: main}

\subsection{Proof of Lemma~\ref{lemma: cluster-based variance}}\label{apx: proof of lemma cluster-based variance}

\begin{proof} We introduce the following notations:
\begin{align*}
    x_{ij}&:=\boldsymbol{1}\{c(i)=c(j)\},\quad \forall i,j,\\
    t_i&:=\tilde{w}_{i}\left(\frac{z_i}{p}-\frac{1-z_i}{1-p}\right)\\
    r_i&:=\sum_{j\in \mathcal{N}_i}v_{ij}z_i,\quad \forall i.
\end{align*}

Rewrite $\hat\tau_{c}=(\sum_{i=1}^n t_iY_i(\BFz))/(nq)$ and $Y_i(\BFz)=\alpha_i+z_i\beta_i+\gamma r_i$, we have
\begin{equation}
\label{eq:ads}
\begin{split}
        \E[\hat\tau_{c}^2]=&\frac{1}{n^2q^2}E\left(\sum_{1\leq i,j \leq n} t_it_jY_i(\BFz)Y_j(\BFz)  \right)\\
    =&\frac{1}{n^2q^2}\sum_{1\leq i,j \leq n}(\alpha_i\alpha_j\E[t_it_j]+\beta_i\beta_j \E[t_it_jz_iz_j]+\gamma^2 \E[t_it_jr_ir_j]\\&+2\alpha_i\beta_j \E[t_it_jz_j]+2\alpha_i\gamma \E[t_it_jr_j] +2\beta_i\gamma \E[t_it_jz_ir_j] ).
\end{split}
\end{equation}
Notice that $\E[t_it_j]=qx_{ij} \E[t_i^2]$, $\E[t_it_jz_iz_j]=qx_{ij} \E[t_i^2z_i^2]+q^2(1-x_{ij})$, $\E[t_it_jz_j]=qx_{ij} \E[t_i^2z_i]$, $\E[t_it_jr_j]=qx_{ij} \E[t_i^2r_j]$, $\E[t_it_jz_ir_j]=qx_{ij} \E[t_i^2z_ir_j]+q^2(1-x_{ij})\sum_{j'\in \mathcal{N}_j}v_{jj'}x_{jj'}$ and $\E[t_it_jr_ir_j]=qx_{ij}\E[t_i^2r_ir_j]+q^2(1-x_{ij})[(\sum_{i'\in \mathcal{N}_i}v_{ii'}x_{ii'})(\sum_{j'\in \mathcal{N}_j}v_{jj'}x_{jj'})+(\sum_{i'\in \mathcal{N}_i}v_{ii'}x_{ji'})(\sum_{j'\in \mathcal{N}_j}v_{jj'}x_{ij'})]$. Plugging these into Eq~\eqref{eq:ads}, we have
\begin{equation}
\begin{split}
\E[\hat\tau_{c}^2]=&\frac{1}{n^2q^2}\sum_{1\leq i,j \leq n}x_{ij}E\left(t_i^2Y_i(\BFz)(\alpha_j+\beta_jz_i+\gamma r_j)\right)\\
&+\frac{1}{n^2}\sum_{1\leq i,j \leq n}\left(\beta_i\beta_j(1-x_{ij}) +2\beta_i\gamma(1-x_{ij})\sum_{j'\in \mathcal{N}_j}v_{jj'}x_{jj'} \right)\\
&+\frac{\gamma^2}{n^2}\sum_{1\leq i,j \leq n}\left((1-x_{ij})\left[\left(\sum_{i'\in \mathcal{N}_i}v_{ii'}x_{ii'}\right)\left(\sum_{j'\in \mathcal{N}_j}v_{jj'}x_{jj'}\right)+\left(\sum_{i'\in \mathcal{N}_i}v_{ii'}x_{ji'}\right)\left(\sum_{j'\in \mathcal{N}_j}v_{jj'}x_{ij'}\right)\right]\right).
\end{split}
\end{equation}
By Lemma~\ref{lem: cluster-based expectation} and $\Var(\hat\tau_{c})=\E[\hat\tau_{c}^2]-\E[\hat\tau_{c}]^2$, we have
\begin{equation}
\begin{split}\label{apx: lemma 1 proof variance decompose}
\Var(\hat\tau_{c})=&\frac{1}{n^2q^2}\sum_{1\leq i,j \leq n}x_{ij}E\left(t_i^2Y_i(\BFz)(\alpha_j+\beta_jz_i+\gamma r_j)\right)\\
&-\frac{1}{n^2}\sum_{1\leq i,j \leq n}x_{ij}\left(\beta_i\beta_j+2\beta_i\gamma\sum_{j'\in \mathcal{N}_j}v_{jj'}x_{jj'}+\gamma^2(\sum_{i'\in \mathcal{N}_i}v_{ii'}x_{ii'})(\sum_{j'\in \mathcal{N}_j}v_{jj'}x_{jj'}) \right)\\
&+\frac{\gamma^2}{n^2}\sum_{1\leq i,j \leq n}(1-x_{ij})\left(\sum_{i'\in \mathcal{N}_i}v_{ii'}x_{ji'}\right)\left(\sum_{j'\in \mathcal{N}_j}v_{jj'}x_{ij'}\right).\\
\end{split}
\end{equation}

To bound the first term in the right-hand side of Eq~\eqref{apx: lemma 1 proof variance decompose}, recall that $0<Y_L\le Y_i(\BFz)\le Y_M$ for all $i \in V$ and $\BFz \in \{0,1\}^n$, which implies that
$0<Y_L\le \alpha_j+\beta_jz_i+\gamma r_j \le Y_M$ for all $i,j \in V$.
Thus,
\begin{align*}
 \frac{qY_L^2}{p(1-p)}=\E[t_i^2Y_L^2]\le  \E\left[t_i^2Y_i(\BFz)(\alpha_j+\beta_jz_i+\gamma r_j)\right]\le \E[t_i^2Y_M^2]=\frac{qY_M^2}{p(1-p)}, \quad \forall i,j \in V.
\end{align*}

Using the fact that $1/n^2\sum_{1\leq i,j \leq n}x_{ij}=1/n^2\sum_{k=1}^m|C_k|^2=\eta$, we can obtain a bound of 
\[\Theta\left(\frac{\eta}{qp(1-p)} \right)\]

To bound the second term in the right-hand side of Eq~\eqref{apx: lemma 1 proof variance decompose}, note that $Y_i(\boldsymbol{0})>0$ implies $\alpha_i>0$ for all $i$. Let $s_i := \sum_{i'\in \mathcal{N}_i}v_{ii'}x_{ii'}$, then we have
\begin{align*}
& 2Y_M(Y_L-Y_M) \\
\leq &\beta_i\beta_j+\beta_i\gamma s_j+\beta_j\gamma s_i+\gamma^2s_is_j\\
=& (\alpha_i+\beta_i+\gamma s_i)(\alpha_j+\beta_j+\gamma s_j)-2\alpha_i(\alpha_j+\beta_j+\gamma s_j)+\alpha_i\alpha_j\\
\leq & \frac{1}{2}Y_M^2.
\end{align*}
Therefore, the second term is bounded by $O(\eta)$

Finally, we bound the third term in the right-hand side of Eq~\eqref{apx: lemma 1 proof variance decompose}. Let $x_{C_k,i}=x_{i,C_k} :=\boldsymbol{1}\{c(i)=C_k\}$. By Eq \eqref{eq:interference model}, we have
\begin{equation}
\begin{split}\label{eq: delta}
&\sum_{1\leq i,j \leq n}(1-x_{ij})\left(\sum_{i'\in \mathcal{N}_i}v_{ii'}x_{ji'}\right)\left(\sum_{j'\in \mathcal{N}_j}v_{jj'}x_{ij'}\right) \\
=&  \sum_{1\leq k\ne l \leq m}\sum_{i\in C_k}\sum_{j\in C_l}(\sum_{i'\in \mathcal{N}_i}v_{ii'}x_{ji'})(\sum_{j'\in \mathcal{N}_j}v_{jj'}x_{ij'}) \\
=&\sum_{1\leq k\ne l \leq m}\sum_{i\in C_k}(\sum_{i'\in \mathcal{N}_i}v_{ii'}x_{C_l,i'})\sum_{j\in C_l}(\sum_{j'\in \mathcal{N}_j}v_{jj'}x_{C_k,j'}) \\
=&\sum_{1\leq k\ne l \leq m}(\sum_{i\in C_k}\sum_{i'\in \mathcal{N}_i\cap C_l}v_{ii'})(\sum_{j\in C_l}\sum_{j'\in \mathcal{N}_j\cap C_k}v_{jj'})\\
=& n^2\delta. 
\end{split}
\end{equation}
Thus the third term is bounded by $\Theta(\delta)$.
Combine all above together and, we complete the proof.
\end{proof}

\subsection{Proof of Lemma~\ref{lem: size bound}}\label{apx: proof of lemma size bound}
\begin{proof} By Eq~\eqref{eq: delta} and Eq \eqref{eq:interference model}, we have
\begin{align*}
 n^2\delta=   &\sum_{k\ne l}\left(\sum_{i\in C_k}\sum_{i'\in \mathcal{N}_i\cap C_l}v_{ii'}\right)\left(\sum_{j\in C_l}\sum_{j'\in \mathcal{N}_j\cap C_k}v_{jj'}\right)\\
=&\sum_{i,j}\left((1-x_{ij})(\sum_{i'\in \mathcal{N}_i}v_{ii'}x_{ji'})(\sum_{j'\in \mathcal{N}_j}v_{jj'}x_{ij'})\right)\\
\le & \sum_{i,j}(\sum_{i'\in \mathcal{N}_i}|v_{ii'}|x_{ji'})(\sum_{j'\in \mathcal{N}_j}|v_{jj'}|x_{ii'}) \quad \text{(by taking the absolute value)}\\
\le & \sum_{i,j}(\sum_{i'\in \mathcal{N}_i}|v_{ii'}|x_{ji'}) \quad \text{(because  } \sum_{i' \in \mathcal{N}_i}|v_{ii'}| \leq 1)\\
=&\sum_{i=1}^n\sum_{i'\in \mathcal{N}_i}|v_{ii'}||C_{c(i')}|\\
\le &\sum_{i=1}^n (\max_{k\in[m]} |C_k|)\sum_{i'\in \mathcal{N}_i}|v_{ii'}|\\
\le& n\max_{k\in[m]} |C_k|  \quad \text{(because  } \sum_{i' \in \mathcal{N}_i}|v_{ii'}| \leq 1).
\end{align*}
Then the result follows. 
\end{proof}

\subsection{Proof of Theorem \ref{theorem: greedy algorihtm upper bound}}\label{apx: proof of theorem greedy algorihtm upper bound}
\begin{proof}
We would show that under the cluster set $\BFC$ generated by maximum weight matching, the value of the objective function $S$ in Algorithm~\ref{alg: greedy algorithm} is $O(d^2/n)$. Since the merging step in Algorithm~\ref{alg: greedy algorithm} strictly decrease the value of $S(\BFC)$, the final value of $S(\BFC)$ when the algorithm stops is still $S(\BFC)=O(d^2/n)$, which implies the variance of the estimator $\hat\tau_m$ is $O(d^2/(np(1-p)))$ from Eq~\eqref{eq: general variance bound}, Lemma~\ref{lemma: cluster-based variance} and Corollary~\ref{corollary: bernoulli design estimator variance}.

To bound $S(\BFC)$ under the cluster set $\BFC$ generated by maximum weight matching, we first bound $\rho$, $\eta$ and $\tilde\delta$, respectively.
By Theorem~\ref{theorem: maximum weight matching} in Appendix \ref{apx: theorem used}, an initial clustering generated by a maximum weight matching guarantees $\sum_{i\in V}\sum_{j\in \mathcal{N}_i}v_{ij}\boldsymbol{1}\{c(i)=c(j)\} \geq(\sum_{i\in V}\sum_{j\in \mathcal{N}_i}v_{ij})/{2d}$ in a graph with maximum degree $d$, thus
\begin{align*}
    \rho=\frac{\sum_{i\in V}\sum_{j\in \mathcal{N}_i}v_{ij}}{\sum_{i\in V}\sum_{j\in \mathcal{N}_i}v_{ij}\boldsymbol{1}\{c(i)=c(j)\} }\le 2d.
\end{align*}

To bound $\eta$, because $|C_k|\leq 2$ in the initial clustering and $\sum_{k=1}^m |C_k| = n$, we have
\[
\eta = \sum_{k=1}^m \frac{|C_k|^2 }{ n^2 } \le \sum_{k=1}^m \frac{2|C_k| }{ n^2 } = \frac{2}{n}.
\]


By Lemma \ref{lem: size bound}, $\tilde\delta\le 2/n$ since the maximum cluster size is 2. Combining the upper bounds for $\rho$, $\eta$ and $\tilde\delta$, $S(\BFC)\equiv S(\rho,\eta,\tilde\delta)$ defined in \eqref{eq: greedy objective} is $O(d^2/n)$.
\end{proof}

\subsection{Proof of Theorem~\ref{The: lower bound cycle graph}}\label{apx: proof of theorem lower bound cycle graph}

\begin{proof} The proof takes two steps. Firstly, we show that $\Var(\hat\tau_m)=\Omega(\rho^2 \eta)$. In the second step, we show a lower bound of $\rho^2 \eta$ under any cluster design. Note that we have $\alpha_i=1$, $\beta_i=0$ for all $i\in[n]$ and $\gamma=0$, therefore, $E(\hat\tau_b)=\bar\beta=0$ and $\Cov(\hat\tau_c,\hat\tau_b)= E(\hat\tau_c\hat\tau_b)$. Below, we can show $E(\hat\tau_c\hat\tau_b)=0$ since $E\left(\frac{z_i}{p}-\frac{1-z_i}{1-p}\right)=0$ for all $i\in [n]$ and $z_i$, $z_j$ are independent if $c(i)\ne c(j)$.
\begin{align*}
E(\hat\tau_c\hat\tau_b)=&\frac{1}{n^2q(1-q)}\sum_{i,j}E\left(\tilde w_i(1-\tilde w_j)\left(\frac{z_i}{p}-\frac{1-z_i}{1-p}\right)\left(\frac{z_j}{p}-\frac{1-z_j}{1-p}\right)Y_i Y_j \right)\\
=& \frac{1}{n^2}\sum_{i,j}E\left(\left(\frac{z_i}{p}-\frac{1-z_i}{1-p}\right)\left(\frac{z_j}{p}-\frac{1-z_j}{1-p}\right)\Big|\tilde w_i=1,\tilde w_j=0 \right)\\
=0.
\end{align*}
Therefore, according to the bound of Lemma~\ref{lemma: cluster-based variance}, we have
\begin{align*}
    \Var(\hat\tau_m)=&\rho^2\Var(\hat\tau_c)+(1-\rho)^2\Var(\hat\tau_b)
    \ge\rho^2\Var(\hat\tau_c)= \Omega\left(\frac{\rho^2 \eta}{p(1-p)}\right)
\end{align*}

We now prove the lower bound of $\rho^2\eta$. In a $(d,\kappa)$-cycle network, we call an edge $(i,j) \in E$ a Type-1 edge if $j\in\{(i\pm 1)\mod n, (i\pm 2)\mod n,...,(i\pm (\kappa-1))\mod n \}$, and a Type-2 edge if $j\in \{(i\pm \kappa)\mod n, (i\pm 2\kappa)\mod n,...,(i\pm d\kappa)\mod n \}$. We say a cluster $\{i_1,i_2,...,i_l\}$ is continuous if and only if 
it includes at least $l-1$ edges from the set $\{(j,(j+1)\mod n):1\le j\le n\}$. Let $E^1$ and $E^2$ be the set of Type-1 and Type-2 edges, respectively. 

Consider a cluster $C$ of size $t$. Let $n_k^C$ denote the number of Type-$k$ edges ($k=1,2$) in cluster $C$, i.e. $n_k^C=|\{(i,j) \in E: i \in C, j\in C;\; (i,j)\in E^k\}|$, for $k\in\{1,2\}$. First, we bound $n_1^C$ as follows:
\begin{enumerate}
\item[(1-a)] When $t\le \kappa$, since the maximum number of edges in $C$ is $t(t-1)$ (when $C$ is a clique), we have
$n_1^C \le  t(t-1)$.

\item[(1-b)] When $\kappa+1\le t\le 2\kappa-1$, 
\begin{align*}
    n_1^C\le (\kappa-1)+\kappa+...+\underbrace{(t-1)+...+(t-1)}_{2\kappa-t}+(t-2)+...+(\kappa-1)=(\kappa-1)(2t-\kappa).
\end{align*}

\item[(1-c)] When $t\ge 2\kappa$, 
\begin{align*}
    n_1^C\le (\kappa-1)+\kappa+...+\underbrace{2(\kappa-1)+...2(\kappa-1)}_{t-2\kappa+2}+2\kappa-3+...+(\kappa-1)=(\kappa-1)(2t-\kappa).
\end{align*}
\end{enumerate}


Similarly, for $n^C_2$, we have the following bounds:
\begin{enumerate}
    \item[(2-a)] When $t\le (d+1)\kappa$, we have $n_2^C\le t\lfloor \frac{t-1}{\kappa}\rfloor$.
    \item[(2-b)] When $(d+1)\kappa+1\le t\le 2d\kappa+1$,
\begin{align*}
    n_2^C\le& d\kappa+(d+1)\kappa+...+ \lfloor\frac{t-1}{\kappa}\rfloor \left(t-2\kappa(\lfloor\frac{t-1}{\kappa}\rfloor-d)\right)+(\lfloor\frac{t-1}{\kappa}\rfloor-1)\kappa+...+d\kappa\\
    =& t\lfloor\frac{t-1}{\kappa}\rfloor-(\lfloor\frac{t-1}{\kappa}\rfloor-d)(\lfloor\frac{t-1}{\kappa}\rfloor-d+1)\kappa.
\end{align*}
\item[(2-c)] When $t\ge 2d\kappa+2$,
\begin{align*}
    n_2^C\le& d\kappa+(d+1)\kappa+...+2d\left(t-2d\kappa\right)+(2d-1)\kappa+... +d\kappa=d(2t-\kappa(d+1)).
\end{align*}
\end{enumerate}


Note that when $(d+1)\kappa+1\le t\le 2d\kappa+1$ (i.e., Case 2-b),
\begin{align*}
    &t\lfloor\frac{t-1}{\kappa}\rfloor-(\lfloor\frac{t-1}{\kappa}\rfloor-d)(\lfloor\frac{t-1}{\kappa}\rfloor-d+1)\kappa\\
    \le & k(\lfloor\frac{t-1}{\kappa}\rfloor+1)\lfloor\frac{t-1}{\kappa}\rfloor-(\lfloor\frac{t-1}{\kappa}\rfloor-d)(\lfloor\frac{t-1}{\kappa}\rfloor-d+1)\kappa\\
    = &\kappa d(2\lfloor\frac{t-1}{\kappa}\rfloor-d+1)\\
    \le& d(2t-\kappa(d+1)),
\end{align*}
so the bound in (2-c) also applies to (2-b).

To summarize, 
\begin{align*}
 n_1^C+n_2^C\le f(t)\ :=\ \left\{
\begin{array}{cl}
t(t-1), &  \text{if } 1\le t\le \kappa\\
 (\kappa-1)(2t-\kappa)+t\lfloor\frac{t-1}{\kappa}\rfloor, &  \text{if } \kappa+1\le t\le \kappa(d+1) \\
 d(2t-\kappa(d+1))+(\kappa-1)(2t-\kappa), &  \text{if }  t\ge \kappa(d+1)+1. \\
\end{array} \right.
\end{align*}

It's easy to check that $f(t)$ is nonnegative and increasing on $t\ge 1$. 
Suppose we have $m$ clusters, and $t_j$ is the size of the $j$th cluster. Then by $n_1^C+n_2^C\le f(t)$ and the Cauchy-Schwarz inequality,
\begin{align*}
\rho^2\eta\ge\frac{\sum_{j=1}^m t_j^2}{n^2}\left(\frac{2n(d+\kappa-1)}{\sum_{i=j}^m f(t_j)}\right)^2\ge 4(d+\kappa-1)^2 \left(\sum_{j=1}^m \frac{f(t_j)^2}{t_j^2}\right)^{-1}.
\end{align*}

Furthermore, we relax $t_j$ to continuous variables.
Note that when $1\le t\le \kappa$, $f(t)/t^{1.5}\le t^{0.5}\le \sqrt{\kappa}$. When $\kappa+1\le t\le \kappa(d+1)$, we have
\begin{align*}
    f(t)/t^{1.5}\le 2\kappa/\sqrt{t}+\sqrt{t}/\kappa\le 2\kappa/\sqrt{\kappa+1}+\sqrt{(d+1)/\kappa}.
\end{align*}

When $\kappa(d+1)+1\le t\le n$, $f(t)/t^{1.5}\le 2(d+\kappa-1)/\sqrt{t}< 2(d+\kappa-1)/\sqrt{\kappa(d+1)}$. Thus, $f(t)^2/t^3=O(\max\{\kappa,d/\kappa\})$. Then $ \sum_{i=j}^m {f(t_j)^2}/{t_j^2}=\sum_{i=j}^m t_j{f(t_j)^2}/{t_j^3}=O(\max\{n\kappa,nd/\kappa\}\sum_{i=j}^m t_j)= O(\max\{n\kappa,nd/\kappa\})$. Therefore, $\rho^2\eta=\Omega(\min\{d^2/(n\kappa),d\kappa/n\})$ and the proof is complete.
\end{proof}

\subsection{Proof of Theorem~\ref{theorem: the algorithm is weight invariant}}\label{apx: proof of theorem algorithm is weight invariant}

\begin{proof} By Definition~\ref{def: proper edge partitioning}, the edge partitioning $E=E_1\cup...\cup E_u$ is proper implies that $E_k=\{(i,j)\in E| i\in V(E_k),\; j\in V(E_k)\}$, for all $k\in[u]$. Therefore, if $(i,j)\in E$, then there is only one element in the set $\{E_k| i\in V(E_k),\; j\in V(E_k)\}$, otherwise $E_1,...,E_u$ are not mutually disjoint. Suppose $(i,j)\in E_k$, by Algorithm~\ref{alg: weight-invariant clustering}, we have
\[P(c(i)=c(j))= P(V(E_k) \text{ is assigned as a cluster})=P(X_k< X_l,\;\forall l\ne k,\; M_{kl}=1) \]

Since $X_1,X_2, \ldots,X_u$ are independent Exponential distributed random variables with parameter $\omega_1,...,\omega_u$, and $\omega$ is the eigenvector of $M$ associated with the largest eigenvalue $\lambda*$, we have 
\begin{align*}
    P(X_k< X_l,\;\forall l\ne k,\; M_{kl}=1)=\frac{\omega_k}{\sum_{l: M_{kl}=1} \omega_l}=\frac{\omega_k}{(M\omega)_k}=1/\lambda*,\;\forall k\in [u].
\end{align*}

Note that the incidence matrix $M$ is nonnegative and irreducible since $M$ can be associated with a strongly connected graph. According to the Perron–Frobenius theorem,
the largest eigenvalue $\lambda^*$ of $M$ will also be the Perron–Frobenius eigenvalue, which satisfies 
\begin{align}\label{eq: Perron–Frobenius}
    \min_i \sum_{j\in [u]}M_{ij}\le \lambda^*\le \max_i \sum_{j\in [u]}M_{ij}
\end{align}

Therefore, $\lambda^*\ge 1$. The proof is completed by combining all results together and apply Lemma~\ref{lemma: weight invariant design}.
\end{proof}

\subsection{Proof of Theorem~\ref{theorem: weight invariant variance}}\label{apx: proof of theorem weight invariant variance}

\begin{proof} According to the variance decomposition formula,
\[ \Var(\hat\tau_m)=\Var(\E[\hat\tau_m|\BFC])+\E[\Var(\hat\tau_m|\BFC)], \]
where $\BFC=\{C_1,...,C_m\}$ is the clustering generated by Algorithm~\ref{alg: weight-invariant clustering}.
Our proof takes two steps. Firstly, we bound $\E[\Var(\hat\tau_m|\BFC)]$. Secondly, we bound $\Var(\E[\hat\tau_m|\BFC])$. To bound $\E[\Var(\hat\tau_m|\BFC)]$, we apply Lemma~\ref{lemma: cluster-based variance}, Corollary~\ref{corollary: bernoulli design estimator variance} and Eq~\eqref{eq: general variance bound}, which tell
\[\Var(\hat\tau_m|\BFC)= O\left( \frac{\rho^2\eta}{p(1-p)}+\rho^2\delta \right)\]
given that $q$ is a fix constant. According to our assumption $|V(E_i)|=O(1)$, $\forall i\in[u]$, Algorithm~\ref{alg: weight-invariant clustering} will generate $\BFC$ with $C_i=O(1)$, $\forall i\in[m]$. Therefore, $\eta=n^{-2}\sum_{i=1}^m |C_i|=O(n^{-1})$. Similarly, we have $\delta= O(n^{-1})$ by Lemma~\ref{lem: size bound}. Since $\rho=\lambda^*$, we use Eq~\eqref{eq: Perron–Frobenius} to bound $\rho$. Notice that $|V(E_i)|=O(1)$, $\forall i\in[u]$ and the maximum degree of the network is $d$, then we have
\begin{align}
    \sum_{j\in [u]}M_{ij}=\sum_{j\in [u]}\boldsymbol{1}\{V(E_j)\cap V(E_i)\ne \emptyset\}=O(d)\;\forall i\in [u]. \label{eq: bound for rho and lambda}
\end{align}

Thus, $\rho=\lambda^*\le \max_i \sum_{j\in [u]}M_{ij} = O(d)$. Combine the bounds for $\rho$, $\eta$ and $\delta$ together, we have
\[\E[Var(\hat\tau_m|\BFC)]= O\left( \frac{d^2}{np(1-p)} \right)\]

Next, we bound $\Var(\E[\hat\tau_m|\BFC])$. Analogues to the proof of Lemma~\ref{lem: cluster-based expectation}, we have $\E[\hat\tau_m|\BFC]=\hat\beta+\rho \gamma/n\sum_{i=1}^n\sum_{j\in \mathcal{N}_i}v_{ij}\boldsymbol{1}\{c(i)= c(j)\}$. To simplify the notation, let $x_{ij}=\boldsymbol{1}\{c(i)= c(j)\}$. Then we have
\begin{align*}
    \Var(\E[\hat\tau_m|\BFC])=&\Var\left(\bar\beta+\rho \gamma/n \sum_{i=1}^n\sum_{j\in \mathcal{N}_i}v_{ij}x_{ij}\right)
    =\frac{\rho^2\gamma^2}{n^2}\Var\left(\sum_{(i,j)\in E}v_{ij}x_{ij}\right).
\end{align*}

To bound the last term in the above equations, we study the value of $\Cov(x_{ij},x_{kl})$ under two scenarios. 
\begin{enumerate}
    \item Suppose $(i,j)\in E_g$ and $(k,l)\in E_h$ if there exist $f\in [u]$ such that $M_{fg}=M_{fh}=1$, then according to Algorithm~\ref{alg: weight-invariant clustering}, $x_{ij}$ and $x_{kl}$ are not independent. In this case, we can use the bound $|\Cov(x_{ij},x_{kl})|=|\E[x_{ij}x_{kl}]-\E[x_{ij}]\E[x_{kl}]|=|\E[x_{ij}x_{kl}]-1/\rho^2|\le 1/\rho$.
    \item Otherwise, $x_{ij}$ and $x_{kl}$ are independent, and $\Cov(x_{ij},x_{kl})=0$.
\end{enumerate}

Given $(i,j)\in E_g$, $g\in [u]$, we are interested in the number of possible $(k,l)\in E$ such that $\Cov(x_{ij},x_{kl})\ne 0$
\begin{align*}
    \sum_{(k,l)\in E}\boldsymbol{1}\{\Cov(x_{ij},x_{kl})\ne 0\}\le & 
\sum_{h\in [u]}\sum_{(k,l)\in E_h} \boldsymbol{1}\left\{\sum_{f\in [u]} M_{fg}M_{fh}>0\right\} \\
\le & \max_{i\in [u]} |E_i|\sum_{h\in [u]}\sum_{f\in [u]} M_{fg}M_{fh}\\
\le & \max_{i\in [u]} |V(E_i)|^2\sum_{f\in [u]} M_{fg}\sum_{h\in [u]}M_{fh}\\
=& O(d^2),
\end{align*}
where the equality is due to the bounds from Eq~\eqref{eq: bound for rho and lambda} and $|V(E_i)|=O(1)$. Combining above bound and $|\Cov(x_{ij},x_{kl})|\le 1/\rho$, $\sum_{j=1}^n|v_{ij}|\le 1$, $\forall i\in [i]$, we rewrite
\begin{align*}
\Var\left(\sum_{(i,j)\in E}v_{ij}x_{ij}\right)=&\sum_{(i,j)\in E}\sum_{(k,l)\in E}v_{ij}v_{kl}\Cov(x_{ij},x_{kl})\\
=&\sum_{(i,j)\in E}\sum_{(k,l)\in E}v_{ij}v_{kl}\Cov(x_{ij},x_{kl})\boldsymbol{1}\{\Cov(x_{ij},x_{kl})\ne 0\}\\
\le & \sum_{(i,j)\in E}\sum_{(k,l)\in E}|v_{ij}||v_{kl}||\Cov(x_{ij},x_{kl})|\boldsymbol{1}\{\Cov(x_{ij},x_{kl})\ne 0\}\\
\le &\sum_{(i,j)\in E}|v_{ij}|O(d^2/\rho)\\
=& O(nd^2/\rho).
\end{align*}
Therefore, we can bound $\Var(\E[\hat\tau_m|\BFC])= O(d^2\rho/n)= O(d^3/n)$, since $\rho=O(d)$ in our previous analysis. Combining all results together, the proof is completed.
\end{proof}

\subsection{Proof of Theorem~\ref{thm:asymptotic_normality}}\label{apx: proof of asymptotic_normality}

\begin{proof} Let $L_i$ be the term within the summation in Eq~\eqref{eq: estimator reformulate for clt}. and $X_i=\frac{1}{\sqrt{n} \rho} (L_i-\E[L_i])$, then $\E[X_i]=0$, $\E[|X_i|^3]\le \left(\frac{2}{\sqrt{n} p(1-p)}Y_M \right)^3$, and $\E[X_i^4]\le \left(\frac{2}{\sqrt{n} p(1-p)}Y_M \right)^4<\infty$. According to the assumption, $\liminf_{n\rightarrow\infty} \sigma^2_n=\Var(\sum_{i=1}^n X_i)>0$.

Firstly, consider the fixed clustering (Algorithm~\ref{alg: greedy algorithm}). For two units $i$ and $j$, if $\{c(k)|k\in \mathcal{N}_i\}\cap \{c(l)|l\in \mathcal{N}_j\}= \emptyset$, then $X_i$ and $X_j$ are independent under Assumption~\ref{ass: linear model}. Note that $|\{c(k)|k\in \mathcal{N}_i\}|$ is at most $d+1$ for all $i\in [n]$ and each cluster is connected to at most $O(d)$ nodes under condition (1). Using the notation $D$ in Lemma~\ref{lem: wasserstein bound}, $D=\max_i\{j| X_i \text{ and } X_j \text{ are dependent}\}=O(d^2)$. Apply Lemma \ref{lem: wasserstein bound} and the result follows.

Under condition (2), $X_i$ and $X_j$ are independent if $\{k\in [u]|\mathcal{N}_i\cap V(E_k)\ne\emptyset\;,\mathcal{N}_j \cap V(E_k)\ne\emptyset\}= \emptyset$. Note that $\mathcal{N}_i$ is $O(d)$, $\forall i\in [n]$ and each node is connected to at most $O(d)$ edge partitions (Eq~\eqref{eq: bound for rho and lambda}). Therefore, $D=\max_i\{j| X_i \text{ and } X_j \text{ are dependent}\}=O(d^3)$. Apply Lemma \ref{lem: wasserstein bound} and the result follows.
\end{proof}

\subsection{Theorem used}\label{apx: theorem used}
\begin{theorem}\label{theorem: maximum weight matching}
For a graph $G(V, E)$ with the edge weights $\{v_{ij}\}_{(i,j) \in E}$ and the maximum degree $d$, the sum of weights in a maximum weight matching is at least $(\sum_{i\in V}\sum_{j\in \mathcal{N}_i}v_{ij})/{2d}$. 
\end{theorem}

\begin{proof} 
    Let $G'(V')$ be the subgraph of $G(V,E)$ induced by a subset of vertices $V'\subset V$ and $G'(V',E')$ the subgraph induced by $V'\subset V$ and $E'\subset E$. For brevity of notations, we  Let $V(E)$ denote the set of all vertices incident to edges in $E$. We provide Algorithm \ref{alg: Graph decomposition into matchings} based on maximum matching.
    
\begin{algorithm}[!ht]\label{alg: Graph decomposition into matchings}
\DontPrintSemicolon
\KwInput{A graph $G(V,E)$}
    Initialize the clustering set $\mathbb{C}\leftarrow\{\}$\;
    Initialize set $A\leftarrow E$\;
   \While{$|A|>0$}
   {
   $U\leftarrow\{u\in V|\texttt{degree}(u,A)=\max_{v\in V}\texttt{degree}(v,A)\}$\;
        Find a maximum matching $M_1=\{(v_1,u_1),(v_2,u_2),...\}$ from the subgraph $G'(U)$\;
        $M_2\leftarrow \{\}$\;
        \For{$u\in U$}{
            \If{$u\in U/V(M_1)$ and $\mathcal{N}_u\not\subset U$}{
                Choose one $u'\in \mathcal{N}_u/ U$ and let $M_2\leftarrow M_2\cup\{(u,u')\}$
            }
        }
        $\mathbb{C}\leftarrow \mathbb{C}\cup\{ \texttt{edge\_set\_to\_clustering}(M_1\cup M_2)\}$\;
        $A\leftarrow A/(M_1\cup M_2)$\;
        $U'=U/V(M_1\cup M_2)$\;
        \If{$|U'|>0$}{
         Find a maximum matching $M_3$ from the subgraph $G'(U,\{(u',v')\in A|u'\in U',v'\in U\})$\;
         $\mathbb{C}\leftarrow \mathbb{C}\cup\{ \texttt{edge\_set\_to\_clustering}(M_3)\}$\;
         $A\leftarrow A/M_3$\;
        }
   }
  \KwOutput{Randomly return one element from $\mathbb{C}$ with probability $|\mathbb{C}|^{-1}$}
\caption{Graph decomposition into matchings}
\end{algorithm}

The function $\texttt{edge\_set\_to\_clustering}(E)$ turns an edge set $E$ into a clustering design. If $(u,v)\in E$, then $u$ and $v$ are in the same cluster. If $u$ is not covered by $E$, the $u$ itself is a cluster. The idea of this algorithm is to decompose the graph $G(V,E)$ into clustering set $\mathbb{C}=\{\BFC_1,\BFC_2,...\}$, such that each edge is covered by exactly one cluster $C^k_j$ from a specific clustering $\BFC_k$. 

The set $A$ represents the uncovered edges. In line 4, we found a vertex set $U$ with the largest degree. In line 5, we found a maximum matching $M_1$ from the subgraph $G'(U)$, thus an unmatched vertex in $U$ cannot be adjacent to another unmatched vertex. In lines 6 to 9, we found edge set $M_2$ such that each vertex in $V(M_1\cup M_2)\cap U$ has exactly one edge from $M_1\cap M_2$ incident to it. Also, the edges covered by clustering $\texttt{edge\_set\_to\_clustering}(M_1\cup M_2)$ is exactly $M_1\cup M_2$. After constructing a clustering from $M_1$ and $M_2$ in line 10, we found the set of vertex $U'$ without incident to $M_1\cup M_2$. Since the graph $G'(U,\{(u',v')\in A|u'\in U',v'\in U\})$ is bipartite and every $u'\in U'$ has the same degree, the maximum matching $M_3$ we found in line 14 is always a perfect matching by Hall's marriage theorem. Thus, remove $M_1$ ,$M_2$ and $M_3$ from $A$ will strictly decrease $\max_{v\in V}\texttt{degree}(v,A)$. Since the maximum degree of graph $G(V,E)$ is $d$, the algorithm will terminate after at most $d$ while loops. Also, once an edge is covered by a clustering, we directly remove it from the uncovered edge set $A$, making each edge is covered exactly once.

Algorithm \ref{alg: Graph decomposition into matchings} decomposes a graph with maximum degree $d$ to at most $2d$ graph matchings, such that each edge in $E$ is covered exactly once. This implies that a maximum weight matching under the weights $\{v_{i,j}\}_{i,j\in [n], i\ne j}$ is at least $(\sum_{i\in V}\sum_{j\in \mathcal{N}_i}v_{ij})/{2d}$.
\end{proof}

\section{Additional Numerical Results}\label{apx: additional numerical result}

{Here we provide a numerical counterexample to show that the regression adjustment approach is biased under our setting, which is discussed in Section~\ref{sec: compare with regression adjustment}. The test instances are generated in the same way as in Section~\ref{section: test instances} except that we multiply each $\hat\alpha_i$ by the degree of node $i$ for all $i\in[n]$. The network size $n$ is chosen from $\{1000,2000,4000\}$. The value of $(r_0,r_1)$ is chosen from the set $\{(4,0),(8,0)\}$.
We use $\hat\tau_m$ and $\hat\tau_{ra}$ to denote the mixed and regression adjustment estimator, respectively. We examine the empirical bias and variance of these two estimators under 10,000 times simulation. The results are summarized as follows:}

\begin{table}[!htbp]
\centering
\begin{tabular}{c|c|c|c|c}
\hline\hline
$(n, r_0, r_1)$ & Bias($\hat\tau_m$) & Bias($\hat\tau_{ra}$) & $\Var(\hat\tau_m)$ & $\Var(\hat\tau_{ra})$ \\ 
\hline
(1000, 4, 0) &\bf 0.014 & 0.723 & 0.463 & \bf 0.009 \\ 
(1000, 8, 0) & \bf 0.005 & 1.517 & 3.024 &\bf  0.033 \\ 
(2000, 4, 0) &\bf  0.001 & 0.831 & 0.332 &\bf  0.005 \\ 
(2000, 8, 0) &\bf  0.003 & 1.696 & 2.128 &\bf  0.017 \\ 
(4000, 4, 0) &\bf  0.001 & 0.933 & 0.192 &\bf  0.004 \\ 
(4000, 8, 0) & \bf 0.002 & 1.984 & 1.210 &\bf  0.011 \\
\hline\hline
\end{tabular}
\caption{Compare the mixed and regression adjustment estimators.}
\label{table: compare the mixed and regression adjustment estimators}
\end{table}

We can observe that the empirical bias of the mixed estimator $\hat\tau_m$ is negligible, while the regression adjustment estimator $\hat\tau_{ra}$'s is very severe. Although the empirical variance of the regression adjustment is several times smaller in magnitude compared with the mixed estimator, its mean squared error will not converge to zero as the population size grows.

\end{document}